\newcommand{\cut}[1]{}
\begin{document}
\mainmatter              
\title{Symbolic model checking of tense logics on rational Kripke models}
\titlerunning{Rational Kripke models}  
%
\author{Wilmari Bekker\inst{1} \and Valentin Goranko\inst{2}}
\authorrunning{Bekker and Goranko}   
%
\tocauthor{Wilmari Bekker, Valentin Goranko}
\institute{University of Johannesburg and University of the Witwatersrand, Johannesburg,\\
\email{bekkerw@gmail.com}
\and
University of the Witwatersrand, Johannesburg,\\
\email{goranko@maths.wits.ac.za}}

\maketitle              

\begin{abstract}
We introduce the class of \emph{rational Kripke models} and study symbolic model checking of the basic tense logic \textbf{K$_t$} and some extensions of it in models from that class. Rational Kripke models are based on (generally infinite) \emph{rational graphs}, with vertices labeled by the words in some regular language and transitions recognized by asynchronous two-head finite automata, also known as \emph{rational transducers}. Every atomic proposition in a rational Kripke model is evaluated in a regular set of states. We show that every formula of \textbf{K$_t$} has an effectively computable regular extension in every rational Kripke model, and therefore local model checking and global model checking of \textbf{K$_t$} in rational Kripke models are decidable. These results are lifted to a number of extensions of \textbf{K$_t$}. We study and partly determine the complexity of the model checking procedures.
\end{abstract}
\section{Introduction}\label{intro}

Verification of models with infinite state spaces using algorithmic symbolic model checking techniques has been an increasingly active area of research over recent years. One very successful approach to infinite state verification is based on the representation of sets of states and transitions by means of automata. It is the basis of various automata-based techniques for model checking, e.g., of linear and branching-time temporal logics on finite transition systems \cite{Var96,KVW00}, regular model checking \cite{bouajjani00regular}, pushdown systems \cite{BouEspMal97,Walukiewicz00,SEA2001}, automatic structures \cite{KhoussainovNerode,BlumensathGraedel} etc. In most of the studied cases of infinite-state symbolic model checking (except for automatic structures), the logical languages are sufficiently expressive for various reachability
properties, but the classes of models are relatively restricted.

In this paper we study a large and natural class of \emph{rational Kripke models}, on which global model checking of the basic tense\footnote{We use the term `tense' rather than `temporal' to emphasize that the accessibility relation is not assumed transitive, as in a usual flow of time.} logic \textbf{K$_t$} (with forward and backward one-step modalities) and of some extensions thereof, are decidable. The language of \textbf{K$_t$} is
sufficient for expressing \emph{local properties}, i.e., those referring to a bounded width neighborhood of predecessors or successors of the current state.
In particular, pre-conditions and post-conditions are local, but not reachability properties. Kesten et al \cite{KestenEtAl01} have formulated the following minimal requirements for an \emph{assertional language $\mathcal{L}$ to be adequate for symbolic model checking}:

\begin{enumerate}

\item
\textsf{The property to be verified and the initial conditions (i.e., the set of initial states) should be
expressible in $\mathcal{L}$.}
\item \textsf{ $\mathcal{L}$ should be effectively closed under the boolean operations, and should possess an
algorithm for deciding equivalence of two assertions.}
\item \textsf{ There should exist an algorithm for constructing the predicate transformer \textit{pred}, where
\textit{pred}$(\phi)$ is an assertion characterizing the set of states that have a successor state satisfying
$\phi$.}
\end{enumerate}

Assuming that the property to be verified is expressible in \textbf{K$_t$}, the first condition above is satisfied in our case. Regarding the set of initial states, it is usually assumed a singleton, but certainly an effective set, and it can be represented by a special modal constant $S$. The second condition is clearly satisfied, assuming the equivalence is with respect to the model on which the verification is being done. As for the third condition, \textit{pred}$(\phi) = \left\langle R \right\rangle \phi$. Thus, the basic modal logic \textbf{K} is \emph{the minimal natural logical language satisfying these requirements}, and hence it suffices for specification of \emph{pre-conditions} over regular sets of states. The tense extension \textbf{K$_t$} enables specification of post-conditions, as well, thus being the basic adequate logic for specifying \emph{local properties} of transition systems and warranting the potential utility of the work done in the present paper. In particular, potential areas of applications of model checking of the basic tense logic to verification of infinite state systems are \emph{bounded model checking} \cite{biere2003}, applied to infinite state systems, and (when extended with reachability) \emph{regular model checking} \cite{bouajjani00regular} -- a framework for algorithmic verification of generally infinite state systems which essentially involves computing reachability sets in regular Kripke models.

The paper is organized as follows: in Section \ref{sec:prel} we introduce {\bf K$_t$} and rational transducers. Section \ref{sec:RKM} introduces and discusses rational Kripke models, and in Section \ref{sec:Synchronized} we introduce synchronized products of transducers and automata. We use them in Section \ref{sec:MCinRKM} to show decidability of global and local symbolic model checking of {\bf K$_t$} in rational Kripke models and in Section \ref{sec:Complexity} we discuss its complexity. The model checking results are strengthened in Section \ref{sec:OtherExt} to hybrid and other extensions of {\bf H$_t(U)$}, for which some model checking tasks remain decidable.

\section{Preliminaries}
\label{sec:prel}

\subsection{The basic tense logic {\bf K$_t$}}

We consider transition systems with one transition relation $R$. The \emph{basic tense logic} \textbf{K$_t$} for such transition systems extends the classical propositional logic with two unary modalities: one associated with $R$ and the other with its inverse $R^{-1}$, respectively denoted by $[R]$ and $[R^{-1}]$.  The generalization of what follows to the case of languages and models for transition systems with many relations is straightforward.  Note that the relation $R$ is not assumed transitive, and therefore the language of \textbf{K$_t$} cannot express $R$-reachability properties.

\subsection{Rational transducers and rational relations}

\textbf{Rational transducers}, studied by Eilenberg \cite{Eilenberg}, Elgot and Mezei \cite{ElgotMezei}, Nivat, Berstel \cite{Berstel}, etc., are \emph{asynchronous automata} on pairs of words. Intuitively, these are finite automata with two autonomous heads that read the input pair of words asynchronously, i.e. each of them can read arbitrarily farther ahead of the other. The transitions are determined by a finite set of pairs of (possibly empty) words; alternatively, a transition can be labeled either by a pair of
letters (when both heads make a move on their respective words) or by $\left\langle a,\epsilon \right\rangle $ or $\left\langle \epsilon , a \right\rangle $, where $a$ is a letter, and $\epsilon$ is the empty word (when one of the heads reads on, while the other is waiting). The formal definition follows.

\begin{definition}
A  \textbf{(rational) transducer} is a tuple
$\mathcal{T}=\left<Q,\Sigma,\Gamma,q_i,F,\rho\right>$ where $\Sigma$ and $\Gamma$ are the input and output alphabets respectively, $Q$ a set of states, $q_i\in Q$ a unique starting state, $F\subseteq Q$ a set of accepting states and $\rho \subseteq Q\times (\Sigma \cup \{\varepsilon \})\times (\Gamma \cup \{\varepsilon \})\times Q$ is the transition relation, consisting of \emph{finitely many} tuples, each containing the current state, the pair of letters (or $\varepsilon$) triggering the transition, and the new state. Alternatively, one can take $\rho \subseteq Q\times \Sigma ^{\ast }\times
\Gamma ^{\ast }\times Q$.

The  language recognized by the transducer $\mathcal{T}$ is the set of all pairs of words for which it has a reading that ends in an accepting state. Thus, the transducer $\mathcal{T}$ recognizes a binary relation $R\subseteq\Sigma^*\times\Gamma^*$.
\end{definition}

This is the `static' definition of rational transducers; they can also be defined `dynamically', as reading an input word, and transforming it into an output word, according to the transition relation which is now regarded as a mapping from words to sets of words (because it can be non-deterministic).

\begin{example}\label{binarytreerat1}
For $\mathcal{T}=\left<Q,\Sigma,\Gamma,q_i,F,\rho\right>$ let: $
Q=\left\{q_1,q_2\right\}; \ \Sigma = \left\{0,1\right\}=\Gamma; \ q_i=q_1; \ F =
\left\{q_2\right\}; \rho =
\left\{\left(q_1,0,0,q_1\right),\left(q_1,1,1,q_1\right),\left(q_1,\epsilon,0,q_2\right),
\left(q_1,\epsilon,1,q_2\right)\right\} $

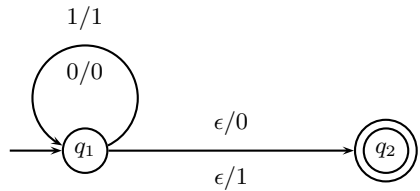
\begin{figure}[here]
\caption{The transducer $\mathcal{T}$ which recognizes pairs of words of the forms
$\left(u,u0\right)$ or $\left(u,u1\right)$ where $u\in\Sigma^*$}

\begin{center}
\begin{pspicture}(8,4)
\pnode(1,1){p0}
\cnodeput(2,1){q1}{$q_1$}
\cnodeput(6,1){q2}{$q_2$}
\cnode(6,1){0.41}{qf}
\ncline{->}{q1}{qf}
\naput{$\epsilon /0$}
\nbput{$\epsilon /1$}
\nccircle{->}{q1}{0.7cm}
\naput{$0/0$} \nbput{$1/1$}
\ncline{->}{p0}{q1}
\end{pspicture}
\end{center}
\end{figure}

Notice that in the representation of $\mathcal{T}$ there is only one edge between two states but that an edge may have more than one label.

\end{example}

A relation $R\subseteq\Sigma^*\times\Gamma^*$ is \textbf{rational} if it is recognizable by a rational transducer. Equivalently (see \cite{Berstel}), given finite alphabets $\Sigma,\Gamma$, a (binary) \textbf{rational relation} over $(\Sigma,\Gamma)$ is a rational subset of $\Sigma^* \times \Gamma^*$, i.e., a subset generated by a rational expression (built up using union, concatenation, and iteration) over a finite subset of $\Sigma^* \times \Gamma^*$. Hereafter, we will assume that the input and output alphabets $\Sigma$ and $\Gamma$
coincide.

Besides the references above, rational relations have also been studied by Johnson \cite{Johnson86}, Frougny and Sakarovich \cite{FrougnySakarovitch93}, and more recently by Morvan \cite{morvan}. It is important to note that the class of rational relations is closed under \emph{unions}, \emph{compositions}, and \emph{inverses} \cite{Berstel}. On the other hand, the class of rational relations is not closed under intersections, complements, and transitive closure (\emph{ibid}).

\section{Rational Kripke models}
\label{sec:RKM}

\subsection{Rational graphs}

\begin{definition}
A graph $\mathcal{G}=\left(S,E\right)$ is \textbf{rational}, if the set of vertices $S$ is a regular language in some finite alphabet $\Sigma$ and the set of edges $E$ is a rational relation on $\Sigma$.
\end{definition}

\begin{example}
\textbf{The infinite grid.} Let $\Sigma=\left\{0,1\right\}$, then the infinite grid with vertices in $\Sigma^*$ is given by Figure \ref{infgrid} and the edge relation of this graph is recognized by the transducer given in Figure \ref{infgrid}.

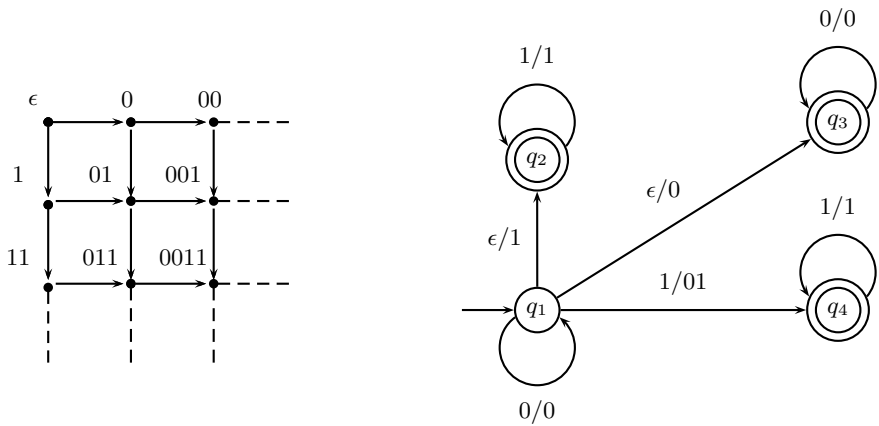
\begin{figure}[here]
\caption{The infinite grid with set of vertices $S=0^*1^*$ and a transducer that recognizes the infinite grid.}
\label{infgrid}
\vspace{1cm}

\begin{pspicture}(4.5,4.5)
\rput(0.3,3.8){\rnode{e}{$\epsilon$}}
\rput(1.55,3.8){\rnode{e}{0}} \rput(2.65,3.8){\rnode{e}{00}}
\psline[border=2pt]{*->}(0.5,3.5)(1.5,3.5)
\psline[border=2pt]{*->}(1.6,3.5)(2.6,3.5)
\psline[linestyle=dashed]{*-}(2.7,3.5)(3.7,3.5)
\rput(0.1,2.8){\rnode{e}{1}} \rput(0.1,1.7){\rnode{e}{11}}
\psline[border=2pt]{*->}(0.5,3.5)(0.5,2.5)
\psline[border=2pt]{*->}(0.5,2.4)(0.5,1.4)
\psline[linestyle=dashed]{*-}(0.5,1.3)(0.5,0.3)
\psline[border=2pt]{->}(0.6,2.45)(1.5,2.45)
\psline[border=2pt]{*->}(1.6,2.45)(2.6,2.45)
\psline[linestyle=dashed]{*-}(2.7,2.45)(3.7,2.45)
\psline[border=2pt]{->}(0.6,1.35)(1.5,1.35)
\psline[border=2pt]{*->}(1.6,1.35)(2.6,1.35)
\psline[linestyle=dashed]{*-}(2.7,1.35)(3.7,1.35)
\psline[border=2pt]{->}(1.6,3.4)(1.6,2.5)
\psline[border=2pt]{->}(1.6,2.35)(1.6,1.4)
\psline[linestyle=dashed]{-}(1.6,1.3)(1.6,0.3)
\psline[border=2pt]{->}(2.7,3.4)(2.7,2.5)
\psline[border=2pt]{->}(2.7,2.35)(2.7,1.4)
\psline[linestyle=dashed]{-}(2.7,1.3)(2.7,0.3)
\rput(1.2,2.8){\rnode{e}{01}} \rput(2.3,2.8){\rnode{e}{001}}
\rput(1.2,1.7){\rnode{e}{011}} \rput(2.3,1.7){\rnode{e}{0011}}

\pnode(6,1){p0} \cnodeput(7,1){q1}{$q_1$}
\cnodeput(7,3){q2}{$q_2$} \cnode(7,3){0.41}{2}
\cnodeput(11,3.5){q3}{$q_3$} \cnode(11,3.5){0.41}{3}
\cnodeput(11,1){q4}{$q_4$} \cnode(11,1){0.41}{4} \ncline{->}{p0}{q1}
\ncline{->}{q1}{2} \naput{$\epsilon/1$} \ncline{->}{q1}{3}
\naput{$\epsilon/0$} \ncline{->}{q1}{4} \naput{$1/01$}
\nccircle[angleA=180]{->}{q1}{0.5cm} \nbput{$0/0$}
\nccircle{->}{2}{0.5cm} \nbput{$1/1$} \nccircle{->}{3}{0.5cm}
\nbput{$0/0$} \nccircle{->}{4}{0.5cm} \nbput{$1/1$}
\end{pspicture}

\end{figure}

\end{example}

\bigskip

\begin{example} \label{bintreeex}
The \textbf{complete binary tree} $\Lambda$.

Figure \ref{combintree} contains the complete binary
tree with vertices in $\left\{0,1\right\}^*$ and labeled by $\Gamma=\left\{a,b\right\}$, as well as
the transducer recognizing it, in which the accepting states are labeled
respectively by $a$ and $b$. The pairs of words for which the transducer ends in the accepting state $q_4$ belong to the left successor relation in the tree (labeled by $a$), and those for which the transducer ends in the accepting state $q_5$ belong to the right successor relation in the tree (labeled by $b$).

\begin{figure}[here]
\caption{The complete binary tree $\Lambda$ and a  labeled transducer recognizing it.} \label{combintree}
\vspace{2cm}

\psset{unit=0.9}
\begin{pspicture}(5,3.5)
\rput(2.5,3.2){\rnode{e}{$\epsilon$}}
\psline[border=2pt]{->}(2.5,3)(1.5,2)
\rput(1.9,2.7){\rnode{e}{$a$}}
\psline[border=2pt]{*->}(2.5,3)(3.5,2)
\rput(3.1,2.7){\rnode{e}{$b$}}
\rput(1.3,2){\rnode{e}{0}} \rput(3.7,2){\rnode{e}{1}}
\psline[border=2pt]{->}(1.5,2)(0.8,0.8)
\rput(0.9,1.4){\rnode{e}{$a$}}
\psline[border=2pt]{*->}(1.5,2)(2.2,0.8)
\rput(2.1,1.4){\rnode{e}{$b$}}
\psline[linestyle=dashed]{*-}(0.8,0.8)(0.4,0)
\psline[linestyle=dashed]{-}(0.8,0.8)(1.2,0)
\psline[linestyle=dashed]{*-}(2.2,0.8)(1.8,0)
\psline[linestyle=dashed]{-}(2.2,0.8)(2.4,0)
\rput(0.5,0.9){\rnode{e}{00}} \rput(1.8,0.9){\rnode{e}{01}}
\psline[border=2pt]{->}(3.5,2)(2.8,0.8)
\rput(2.9,1.4){\rnode{e}{$a$}}
\psline[border=2pt]{*->}(3.5,2)(4.2,0.8)
\rput(4.1,1.4){\rnode{e}{$b$}}
\psline[linestyle=dashed]{*-}(2.8,0.8)(2.6,0)
\psline[linestyle=dashed]{-}(2.8,0.8)(3.2,0)
\psline[linestyle=dashed]{*-}(4.2,0.8)(3.8,0)
\psline[linestyle=dashed]{-}(4.2,0.8)(4.6,0)
\rput(3.2,0.9){\rnode{e}{10}} \rput(4.4,0.9){\rnode{e}{11}}

\pnode(9.5,0){p0} \cnodeput(9.5,1){q1}{$q_1$}
\cnodeput(6,2){q2}{$q_2$} \cnodeput(13,2){q3}{$q_3$}
\cnodeput(7,4){q4}{$q_4$} \cnodeput(12,4){q5}{$q_5$}
\cnode(7,4){0.41}{f1} \cnode(12,4){0.41}{f2}
\nccircle[angleA=180]{->}{q2}{0.6cm} \naput{$0/0$} \nbput{$1/1$}
\nccircle[angleA=180]{->}{q3}{0.6cm} \naput{$0/0$} \nbput{$1/1$}
\ncline{->}{p0}{q1} \ncline{->}{q1}{q2} \naput{$0/0$}
\ncline{->}{q1}{q3} \nbput{$1/1$} \ncline{->}{q2}{f2}
\naput[npos=0.9]{$\epsilon/1$} \ncline{->}{q2}{f1}
\naput{$\epsilon/0$} \ncline{->}{q3}{f1}
\nbput[npos=0.9]{$\epsilon/0$} \ncline{->}{q3}{f2}
\nbput{$\epsilon/1$} \ncline{->}{q1}{f1}
\naput[npos=0.9]{$\epsilon/0$} \ncline{->}{q1}{f2}
\nbput[npos=0.9]{$\epsilon/1$} \rput(6.2,4.5){\rnode{a}{$a$}}
\rput(12.8,4.5){\rnode{b}{$b$}}
\end{pspicture}

\end{figure}
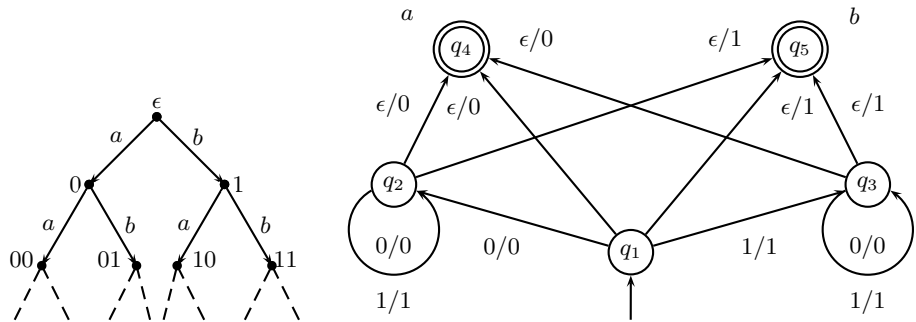
\end{example}

An important and extensively studied subclass of rational graphs is the class of \emph{automatic graphs}  \cite{KhoussainovNerode,BlumensathGraedel}.  These are rational graphs whose transition relations are recognized by \emph{synchronized} transducers.

As shown by Blumensath \cite{BlumensathThesis}, the configuration graph of every Turing machine is an automatic graph. Consequently, important queries, such as reachability, are generally undecidable on automatic graphs, and hence on rational graphs. Furthermore, Morvan showed in \cite{morvan} that the configuration graphs of Petri nets \cite{Reis1985} are rational (in fact, automatic) graphs, too.

Moreover, Johnson \cite{Johnson86} proved that even very simple first-order definable properties of a rational relation, e.g., reflexivity, transitivity, symmetry, turn out to be undecidable (with an input the transducer recognizing the relation), by reduction from the Post Correspondence Problem (PCP). Independently, Morvan \cite{morvan} has shown that the query $\exists x Rxx$ on rational frames is undecidable, as well. The reduction of PCP here is straightforward: given a PCP $\{(u_{1},v_{1}),\ldots ,(u_{n},v_{n})\}$, consider a transducer with only one state, which is both initial and accepting, and it allows the transitions $(u_{1},v_{1}),\ldots ,(u_{n},v_{n})$. Then, the PCP has a solution precisely if some pair $(w,w)$ is accepted by the transducer. Inclusion and equality of rational relations are undecidable, too,  \cite{Berstel}.

Furthermore, in \cite{ThomasConstructing} W. Thomas has constructed a single rational graph with undecidable first-order theory, by encoding the halting problem of a universal Turing machine.

\subsection{Rational Kripke models}

Rational graphs can be viewed as Kripke frames, hereafter called \emph{rational Kripke frames}.

\begin{definition}
A Kripke model $\mathcal{M}=\left(\mathcal{F},V\right)=\left(S,R,V\right)$ is a \textbf{rational Kripke model} (RKM) if the frame $\mathcal{F}$ is a rational Kripke frame, and the valuation $V$ assigns a regular language to each propositional variable, i.e., $V\left(p\right)\in\mathsf{REG}\left(\Sigma^*\right)$ for every $ p\in\Phi$. A
valuation satisfying this condition is called a rational valuation.
\end{definition}

\begin{example} \label{petrirat}
In  this example we will present a RKM based on the configuration graph of a Petri net. To make it self-contained, we give the basic relevant definitions here; for more detail see e.g., \cite{Reis1985}. A Petri net is a tuple $\left(P,T,F,M\right)$ where $P$ and $T$ are disjoint finite sets and their elements are called \emph{places} and \emph{transitions} respectively. $F:\left(P\times T\right)\cup\left(T\times P\right)\rightarrow\mathbb{N}$ is called a flow function and is such that if $F\left(x,y\right)>0$ then there is an arc from $x$ to $y$ and $F\left(x,y\right)$ is the multiplicity of that arc.  Each of the places contain a number of \emph{tokens} and a vector of integers $M\in\mathbb{N}^{|P|}$  is called a \emph{configuration}  (or, \emph{marking}) of the Petri net if the i$^{th}$ component of $M$ is equal to the number of tokens at the i$^{th}$ place in the Petri net. The \emph{configuration graph} of $\mathcal{N}$ has as vertices all possible configurations of $\mathcal{N}$ and the edges represent the possible transitions between configurations.

Now, let $\mathcal{N}=\left(P,T,F,M\right)$ be a Petri net, where $P=\left\{p_1,p_2\right\},\;
T=\left\{t\right\},\;F\left(p_1,t\right)=2,\;F\left(t,p_2\right)=3$ and
$M=\left(4,5\right)$.  Let $\mathcal{M}=\left(S,R,V\right)$ where $S=0^*10^*$, $R$ the transition relation of the configuration graph of $\mathcal{N}$ and $V$ the valuation defined by $V\left(p\right)=0010^*$ and $V\left(q\right)=0^*1000$.  Then $\mathcal{M}$ is a RKM and can be presented by the various machines in Figure \ref{finitepresent}.

\begin{figure}[here]
\caption{A finite presentation $\mathcal{M}$:
$A_1,A_2$ and $A_3$ recognize $S,V\left(p\right)$ and $V\left(q\right)$ respectively, and $T$ recognizes $R$.}
\label{finitepresent}

\vspace{15mm}
\begin{pspicture}(13,5)

\rput(0,5.5){\rnode{A}{$A_1:$}}
\pnode(0,4.5){p0}
\cnodeput(1,4.5){q1}{$q_1$}
\cnodeput(4,4.5){q2}{$q_2$}
\cnode(4,4.5){0.41}{2}
\ncline{->}{p0}{q1}
\nccircle{->}{q1}{0.5cm}
\nbput{$0$}
\nccircle{->}{2}{0.5cm}
\nbput{$0$}
\ncline{->}{q1}{2}
\naput{$1$}

\rput(5,5.5){\rnode{B}{$A_2:$}}
\pnode(5,4.5){pp0}
\cnodeput(6,4.5){qq1}{$p_1$}
\cnodeput(9,4.5){qq2}{$p_2$}
\cnode(9,4.5){0.41}{22}
\ncline{->}{pp0}{qq1}
\ncline{->}{qq1}{22}
\naput{$001$}
\nccircle{->}{22}{0.5cm}
\nbput{$0$}

\rput(0,3){\rnode{C}{$A_3:$}}
\pnode(0,2){pq0}
\cnodeput(1,2){qp1}{$r_1$}
\cnodeput(4,2){qp2}{$r_2$}
\cnode(4,2){0.41}{2q}
\ncline{->}{qp1}{2q}
\naput{$1000$}
\nccircle{->}{qp1}{0.5cm}
\nbput{$0$}
\ncline{->}{pq0}{qp1}

\rput(5,3){\rnode{D}{$T:$}}
\pnode(5,2){q0}
\cnodeput(6,2){p1}{$s_1$}
\cnodeput(9,2){p2}{$s_2$}
\cnodeput(12,2){p3}{$s_3$}
\cnode(12,2){0.41}{3}
\ncline{->}{q0}{p1}
\ncline{->}{p1}{p2}
\naput{$001/1$}
\ncline{->}{p2}{3}
\naput{$\epsilon/000$}
\nccircle{->}{p1}{0.5cm}
\nbput{$0/0$}
\nccircle{->}{p2}{0.5cm}
\nbput{$0/0$}

\end{pspicture}
\vspace{-15mm}
\end{figure}
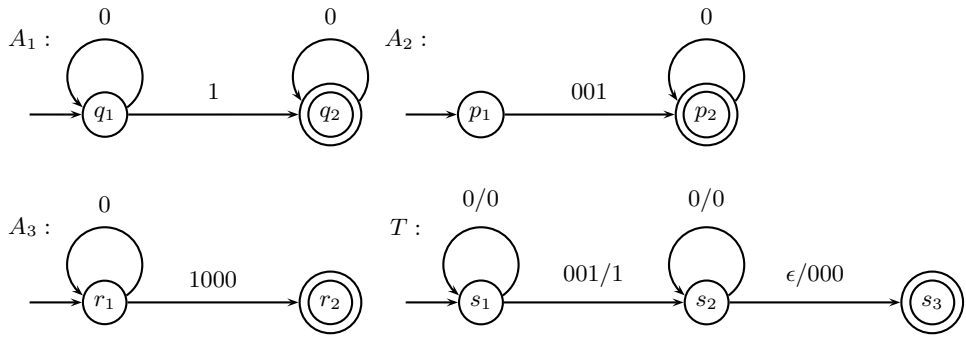

\end{example}

\section{Synchronized products of transducers and automata}
\label{sec:Synchronized}

In this section $\epsilon$ will denote the empty word, but will also be treated as a special symbol in an extended  alphabet.

\begin{definition}
Let $u$ be a word in some alphabet $\Gamma$ and $\gamma\in\Gamma$. The $\gamma$\textbf{-reduction} of $u$, denoted $u|_{\gamma}$, is the word obtained from $u$ after deleting all occurrences of $\gamma$. Likewise, if  $Y$ is a language in the alphabet $\Gamma$, then the $\gamma$\textbf{-reduction} of $Y$, denoted $Y|_{\gamma}$, is the language consisting of all $\gamma$-reductions of words in $Y$.
\end{definition}

\begin{lemma}
\label{lem:reduction}

If $Y$ is a regular language over an alphabet $\Gamma$ then $Y|_{\gamma}$ is a regular language over the alphabet $\Gamma - \{\gamma\}$.
\end{lemma}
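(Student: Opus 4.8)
The plan is to realize the $\gamma$-reduction as the image of $Y$ under an alphabet-reducing homomorphism, and then invoke the classical closure of regular languages under homomorphisms; equivalently, and this is the version I would actually write out to keep the section self-contained, to give a direct automaton transformation. Define the monoid homomorphism $h\colon\Gamma^{*}\to(\Gamma-\{\gamma\})^{*}$ by $h(\gamma)=\varepsilon$ and $h(\delta)=\delta$ for every $\delta\in\Gamma-\{\gamma\}$. By definition of the $\gamma$-reduction, $h(u)=u|_{\gamma}$ for every word $u\in\Gamma^{*}$, hence $h(Y)=Y|_{\gamma}$, so it suffices to show that $h(Y)$ is regular over $\Gamma-\{\gamma\}$, which is immediate from the standard theorem that homomorphic images of regular languages are regular over the target alphabet.

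For the automaton-level argument: fix a finite automaton $\mathcal{A}=\langle Q,\Gamma,q_{0},\Delta,F\rangle$ recognizing $Y$, which exists since $Y$ is regular. Construct $\mathcal{A}'=\langle Q,\Gamma-\{\gamma\},q_{0},\Delta',F\rangle$ by turning every $\gamma$-labeled transition of $\mathcal{A}$ into an $\varepsilon$-transition and leaving all others untouched; formally, $\Delta'$ contains $(p,\delta,r)$ for each $(p,\delta,r)\in\Delta$ with $\delta\neq\gamma$, and $(p,\varepsilon,r)$ for each $(p,\gamma,r)\in\Delta$. This $\mathcal{A}'$ is a finite automaton with $\varepsilon$-moves over the alphabet $\Gamma-\{\gamma\}$, hence recognizes a regular language; if an $\varepsilon$-free automaton is preferred, the usual $\varepsilon$-closure construction on the same state set removes the $\varepsilon$-transitions.

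It then remains to verify $L(\mathcal{A}')=Y|_{\gamma}$ by a run-to-run correspondence. Any accepting run of $\mathcal{A}$ on a word $u\in Y$ becomes, after relabeling every $\gamma$-step as an $\varepsilon$-step, an accepting run of $\mathcal{A}'$ on $u|_{\gamma}$, which gives $Y|_{\gamma}\subseteq L(\mathcal{A}')$. Conversely, any accepting run of $\mathcal{A}'$ on a word $w$ lifts to an accepting run of $\mathcal{A}$ by reinserting a $\gamma$ on each step produced from an $\varepsilon$-transition; the word read along that run is some $u\in Y$ with $u|_{\gamma}=w$, giving the reverse inclusion.

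The statement poses no genuine obstacle — it is a routine closure property — and the only points needing (minor) care are bookkeeping ones: checking that the construction really outputs an automaton over the reduced alphabet $\Gamma-\{\gamma\}$, so that the claim about the alphabet, and not merely regularity, is honored; handling the degenerate cases $Y=\emptyset$ and $\gamma\notin\Gamma$; and being slightly careful with the $\varepsilon$-closure if one insists on an $\varepsilon$-free NFA. Citing the homomorphism closure theorem directly makes even these remarks superfluous.
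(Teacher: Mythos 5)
Your proposal is correct and matches the paper's argument in substance: the paper's construction (delete $\gamma$-transitions, add shortcut transitions, and enlarge the accepting set along $\gamma^*$-paths) is exactly the $\varepsilon$-elimination of your automaton $\mathcal{A}'$ obtained by relabeling $\gamma$-edges as $\varepsilon$-edges. Your opening observation that $Y|_{\gamma}$ is the image of $Y$ under the erasing homomorphism is a cleaner one-line packaging of the same closure fact, but it does not constitute a genuinely different route.
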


\begin{proof}\textit{(Sketch)}
An automaton $\mathcal{A}|_{\gamma}$ recognizing $Y|_{\gamma}$, called here the $\gamma$\textbf{-reduction} of $\mathcal{A}$ can be constructed from an automaton $\mathcal{A}$ recognizing $Y$ as follows:

\begin{enumerate}
\item Remove all $\gamma$-transitions.

\item Add $(q,\gamma',q'')$ as a transition in $\mathcal{A}|_{\gamma}$ whenever     $(q,\gamma,q')$ and $(q',\gamma',q'')$ are transitions in  $\mathcal{A}$ and  $\gamma \neq \gamma'$.

\item Finally, define the accepting states of $\mathcal{A}|_{\gamma}$ as all accepting states of $\mathcal{A}$ plus those states $q$ such that
    $(q\stackrel{\gamma^*}{\rightarrow}q')$ in $\mathcal{A}$ and $q'$ is an accepting  state in $\mathcal{A}$.
    \end{enumerate} \hspace{\fill} $\triangleleft$
\end{proof}

\begin{definition}
A \textbf{run} of a finite automaton $\mathcal{A}=\left<Q,\Sigma,q^0,F,\delta\right>$ is
a sequence of states and transitions of $\mathcal{A}$:
$q_0\stackrel{x_1}{\rightarrow}q_1\stackrel{x_2}{\rightarrow}q_2\cdots\stackrel{x_n}{\rightarrow}q_{n}$,
such that $q_0=q^0$, $q_j\in Q, x_j\in \Sigma$, and $q_{j} \in
\delta\left(q_{j-1},x_j\right)$ for every $j = 1,2,\ldots,n$.

A run is \textbf{accepting} if it ends in an accepting state.

Run and accepting runs of transducers are defined likewise.
\end{definition}

\begin{definition}
A \textbf{stuttering run} of a finite automaton
$\mathcal{A}=\left<Q,\Sigma,q^0,F,\delta\right>$ is a sequence
$q_0\stackrel{x_1}{\rightarrow}q_1\stackrel{x_2}{\rightarrow}q_2\cdots\stackrel{x_n}{\rightarrow}q_{n}$,
such that $q_0=q^0$, $q_j\in Q$, and either $x_j\in \Sigma$ and $q_{j} \in
\delta\left(q_{j-1},x_j\right)$, or $x_j = \epsilon$ and $q_{j}  = q_{j-1}$ for every $j = 1,2,\ldots,n$.

Thus, a stuttering run of an automaton can be obtained by inserting
$\epsilon$-transitions from a state to itself into a run of that automaton. If the latter run is accepting, we declare the stuttering run to be an \textbf{accepting stuttering run}.

A \textbf{stuttering word} in an alphabet $\Sigma$ is any word in $\Sigma\cup\left\{\epsilon\right\}$.

The \textbf{stuttering language} of the automaton $\mathcal{A}$ is the set
$L^{\epsilon}(\mathcal{A})$ of all stuttering words whose $\epsilon$-reductions are recognized by $\mathcal{A}$; equivalently,  all stuttering words for which there is an accepting stuttering run of the automaton.
\end{definition}

\begin{definition}
Let $\mathcal{T}=\left<Q_\mathcal{T},\Sigma,q_\mathcal{T}^0,F_\mathcal{T},\rho_\mathcal{T}
\right>$ be a transducer, and let $\mathcal{A}$ be a (non-deterministic) finite automaton given by
$\mathcal{A}=\left<Q_\mathcal{A},\Sigma,q_\mathcal{A}^0,F_\mathcal{A},\delta_\mathcal{A}\right>$.

The \textbf{synchronized product} of $\mathcal{T}$ with $\mathcal{A}$ is the finite automaton:
\[\mathcal{T}\rightthreetimes\mathcal{A}=\left<Q_\mathcal{T}\times Q_\mathcal{A},\Sigma,
\left(q_\mathcal{T}^0,q_\mathcal{A}^0\right),F_\mathcal{T}\times F_\mathcal{A},
\delta_{\mathcal{T}\rightthreetimes\mathcal{A}}\right>\]
where $\delta_{\mathcal{T}\rightthreetimes\mathcal{A}} : \left(Q_\mathcal{T}\times
Q_\mathcal{A}\right)\times\left(\Sigma\cup\left\{\epsilon\right\}\right) \rightarrow \mathcal{P}(Q_\mathcal{T}\times Q_\mathcal{A})$ is such that, for any $p_\mathcal{T}^1,p_\mathcal{T}^2\in Q_\mathcal{T}$ and
$p_\mathcal{A}^1,p_\mathcal{A}^2\in Q_\mathcal{T}$ then
$\left(p_\mathcal{T}^2,p_\mathcal{A}^2\right) \in
\delta_{\mathcal{T}\rightthreetimes\mathcal{A}}\left(\left(p_\mathcal{T}^1,p_\mathcal{A}^1\right),x\right)$
if and only if
\begin{enumerate}
\item either there exists a $y\in\Sigma$ such that
    $\delta_\mathcal{A}\left(p_\mathcal{A}^1,y\right)=p_\mathcal{A}^2$ and
    $\left(p_\mathcal{T}^1,x,y,p_\mathcal{T}^2\right)\in\rho_\mathcal{T}$,
\item or $\left(p_\mathcal{T}^1,x,\epsilon,p_\mathcal{T}^2\right)\in\rho_\mathcal{T}$
    and $p_\mathcal{A}^1=p_\mathcal{A}^2$.
\end{enumerate}
 \end{definition}

Note that every run $R_{\mathcal{T}\rightthreetimes\mathcal{A}} =
(p_\mathcal{T}^0,p_\mathcal{A}^0)
\stackrel{u_1}{\rightarrow}(p_\mathcal{T}^1,p_\mathcal{A}^1)\stackrel{u_2}
{\rightarrow}\cdots\stackrel{u_n}{\rightarrow}(p_\mathcal{T}^{n},p_\mathcal{T}^{n})$ of the automaton $\mathcal{T}\rightthreetimes\mathcal{A}$ can be obtained from a pair:
\newline a run $R_\mathcal{T} =
p_\mathcal{T}^0\stackrel{\left(u_1/w_1\right)}{\rightarrow}p_\mathcal{T}^1
\stackrel{\left(u_2/w_2\right)}{\rightarrow}p_\mathcal{T}^2\cdots
\stackrel{\left(u_n/w_n\right)}{\rightarrow}p_\mathcal{T}^{n}$
in $\mathcal{T}$, \newline
and a stuttering run $R^s_\mathcal{A} =
p_\mathcal{A}^0\stackrel{w_1}{\rightarrow}p_\mathcal{A}^1\stackrel{w_2}{\rightarrow}p_\mathcal{A}^2\cdots
\stackrel{w_n}{\rightarrow}p_\mathcal{A}^{n}$ in $\mathcal{A}$, \newline by pairing the respective states $p_\mathcal{T}^j$ and $p_\mathcal{A}^j$ and removing the output symbol $w_j$ for every  $j = 1,2,\ldots,n$.

Let the reduction of $R^s_\mathcal{A}$ be the run $R_\mathcal{A} =
q_\mathcal{A}^0\stackrel{v_1}{\rightarrow}q_\mathcal{A}^1\stackrel{v_2}{\rightarrow}q_\mathcal{A}^2\cdots
\stackrel{v_m}{\rightarrow}q_\mathcal{A}^{m}$, with $m\leq n$. Then we say that the run
$R_{\mathcal{T}\rightthreetimes\mathcal{A}}$ is a {\bf synchronization of the runs} $R_\mathcal{T}$ and $R_\mathcal{A}$.

Note, that the synchronization of accepting runs of $\mathcal{T}$ and $\mathcal{A}$ is an accepting run of $R_{\mathcal{T}\rightthreetimes\mathcal{A}}$.
The following lemma is now immediate:

\begin{lemma}
\label{lem:synchprod} Let
$\mathcal{T}=\left<Q_\mathcal{T},\Sigma,q_\mathcal{T}^0,F_\mathcal{T},\rho_\mathcal{T}
\right>$ be a transducer recognizing the relation $R(\mathcal{T})$ and let
$\mathcal{A}=\left<Q_\mathcal{A},\Sigma,q_\mathcal{A}^0,F_\mathcal{A},\delta_\mathcal{A}\right>$
be a finite automaton recognizing the language $L(\mathcal{A})$. Then the language recognized by the synchronized product of $\mathcal{T}$ and $\mathcal{A}$ is
\[L(\mathcal{T}\rightthreetimes\mathcal{A}) = \{ u \mid \exists w \in L^{\epsilon}(\mathcal{A})
(u R(\mathcal{T}) w). \} \]
\end{lemma}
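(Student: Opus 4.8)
The plan is to prove the two inclusions of the claimed equality separately, using the correspondence between runs of $\mathcal{T}\rightthreetimes\mathcal{A}$ and synchronizations of runs of $\mathcal{T}$ and $\mathcal{A}$ that was set up in the paragraphs immediately preceding the lemma.

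First I would prove the inclusion $L(\mathcal{T}\rightthreetimes\mathcal{A}) \subseteq \{ u \mid \exists w \in L^{\epsilon}(\mathcal{A})\ (u\, R(\mathcal{T})\, w) \}$. Take any $u \in L(\mathcal{T}\rightthreetimes\mathcal{A})$; by definition there is an accepting run $R_{\mathcal{T}\rightthreetimes\mathcal{A}}$ on $u$, ending in a state of $F_\mathcal{T}\times F_\mathcal{A}$. Unpacking the definition of $\delta_{\mathcal{T}\rightthreetimes\mathcal{A}}$ transition by transition, each step either uses clause (1), contributing a genuine transition $(p_\mathcal{T}^1,x,y,p_\mathcal{T}^2)\in\rho_\mathcal{T}$ together with an $\mathcal{A}$-transition on $y$, or uses clause (2), contributing a transition $(p_\mathcal{T}^1,x,\epsilon,p_\mathcal{T}^2)\in\rho_\mathcal{T}$ and a stuttering $\epsilon$-step in $\mathcal{A}$. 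Collecting the first-coordinate moves gives an accepting run $R_\mathcal{T}$ of the transducer with input $u$ and some output $w = w_1\cdots w_n$ (where each $w_j\in\Sigma\cup\{\epsilon\}$); collecting the second-coordinate moves gives an accepting stuttering run $R^s_\mathcal{A}$ of $\mathcal{A}$ reading exactly the stuttering word $w$. Since the run ends in $F_\mathcal{T}\times F_\mathcal{A}$, both $R_\mathcal{T}$ and $R^s_\mathcal{A}$ are accepting, so $u\, R(\mathcal{T})\, w$ and, by definition of the stuttering language, $w\in L^{\epsilon}(\mathcal{A})$.

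For the reverse inclusion, suppose $u\, R(\mathcal{T})\, w$ for some $w\in L^{\epsilon}(\mathcal{A})$. The relation $u\, R(\mathcal{T})\, w$ is witnessed by an accepting run $R_\mathcal{T}$ of $\mathcal{T}$ with input $u$ and output word $w$; here I would be slightly careful that $w$ is a genuine word while the output symbols $w_j$ along $R_\mathcal{T}$ are drawn from $\Sigma\cup\{\epsilon\}$, so that $w$ is the $\epsilon$-reduction of the sequence $w_1\cdots w_n$. Since $w\in L^{\epsilon}(\mathcal{A})$ there is an accepting stuttering run of $\mathcal{A}$ on the stuttering word $w_1\cdots w_n$ — this is exactly where the stuttering language is the right notion: it lets $\mathcal{A}$ match the output of $\mathcal{T}$ symbol-by-symbol, inserting self-loops on $\epsilon$ wherever $\mathcal{T}$ emits $\epsilon$. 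Pairing the states of $R_\mathcal{T}$ with those of this stuttering run step by step, and checking that each pair of steps satisfies clause (1) or clause (2) of the definition of $\delta_{\mathcal{T}\rightthreetimes\mathcal{A}}$, produces a run of $\mathcal{T}\rightthreetimes\mathcal{A}$ on input $u$; since both component runs are accepting, it ends in $F_\mathcal{T}\times F_\mathcal{A}$, so $u\in L(\mathcal{T}\rightthreetimes\mathcal{A})$.

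The lemma is called ``immediate'' in the text, and indeed the only genuine content is the bookkeeping of aligning a transducer run with a stuttering automaton run; the main thing to get right is the $\epsilon$ handling — namely that the output along $R_\mathcal{T}$ should be read as a \emph{stuttering word} whose $\epsilon$-reduction is $w$, and that $w\in L^{\epsilon}(\mathcal{A})$ is precisely what guarantees an accepting stuttering run of $\mathcal{A}$ on that stuttering word. Given the synchronization machinery already established before the statement, I would expect the whole argument to be a short paragraph simply invoking that correspondence in both directions.
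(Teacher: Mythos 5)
Your proof is correct and follows exactly the route the paper intends: the paper declares the lemma ``immediate'' from the preceding correspondence between runs of $\mathcal{T}\rightthreetimes\mathcal{A}$ and synchronizations of a $\mathcal{T}$-run with a stuttering $\mathcal{A}$-run, and your two inclusions are just that correspondence spelled out in both directions, with the right attention to the $\epsilon$-handling. No gap; this is a faithful expansion of the paper's (omitted) argument.
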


\section{Model checking of K$_t$ in rational Kripke models}
\label{sec:MCinRKM}

In this section we will establish decidability of the basic model checking problems for formulae of K$_t$ in rational Kripke models.

\begin{lemma} \label{ratprevreg}
Let $\Sigma$ be a finite non-empty alphabet, $X\subseteq\Sigma^*$
a regular subset, and let $R\subseteq\Sigma^*\times\Sigma^*$ be a rational relation.  Then the sets
\[\left<R\right>X= \{ u\in\Sigma^* | \exists v\in X (u
R v) \}\] and
\[\left<R^{-1}\right>X= \{ u\in\Sigma^*|\exists v\in X(v
R u)\}\] are regular subsets of $\Sigma^*$.
\end{lemma}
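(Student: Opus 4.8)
The plan is to reduce the statement to Lemma~\ref{lem:synchprod}. Suppose $R$ is recognized by a transducer $\mathcal{T}$ and $X$ is recognized by a finite automaton $\mathcal{A}$. For the first claim, note that $\left<R\right>X = \{ u \mid \exists v \in X\,(u\,R\,v)\}$ is exactly the set $\{ u \mid \exists w \in L^{\epsilon}(\mathcal{A})\,(u\,R(\mathcal{T})\,w)\}$, because a stuttering word $w$ lies in $L^{\epsilon}(\mathcal{A})$ iff its $\epsilon$-reduction lies in $L(\mathcal{A}) = X$, and $u\,R\,w$ depends only on that $\epsilon$-reduction (the transducer reads $w$ letter by letter, and $\epsilon$-symbols correspond to moves in which the second head waits). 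Hence by Lemma~\ref{lem:synchprod}, $\left<R\right>X = L(\mathcal{T}\rightthreetimes\mathcal{A})$, which is regular by construction, being the language of a finite automaton. This also makes the regular set effectively computable, since the synchronized product is an explicit finite construction from $\mathcal{T}$ and $\mathcal{A}$.

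For the second claim I would exploit the closure of rational relations under inverses, stated in Section~\ref{sec:prel} (following~\cite{Berstel}): from $\mathcal{T}$ one effectively builds a transducer $\mathcal{T}^{-1}$ recognizing $R^{-1}$, simply by swapping the two components in every transition label $\langle a, b\rangle \mapsto \langle b, a\rangle$. Then $\left<R^{-1}\right>X = \{ u \mid \exists v \in X\,(v\,R\,u)\} = \{ u \mid \exists v \in X\,(u\,R^{-1}\,v)\}$, and applying the already-proved first claim (or directly Lemma~\ref{lem:synchprod}) to $\mathcal{T}^{-1}$ and $\mathcal{A}$ yields $\left<R^{-1}\right>X = L(\mathcal{T}^{-1}\rightthreetimes\mathcal{A})$, again regular and effectively computable.

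The one genuine subtlety — and the step I would write out with some care rather than wave through — is verifying that the equality $\{ u \mid \exists v \in X\,(u\,R\,v)\} = \{ u \mid \exists w \in L^{\epsilon}(\mathcal{A})\,(u\,R(\mathcal{T})\,w)\}$ holds, i.e. that padding $v$ with internal $\epsilon$'s neither creates nor destroys any pair accepted by $\mathcal{T}$. The $\supseteq$ direction uses that $w|_{\epsilon} = v \in X$ and that deleting waiting-moves from an accepting run of $\mathcal{T}$ on $(u,w)$ gives an accepting run on $(u,v)$; the $\subseteq$ direction uses Lemma~\ref{lem:reduction}'s style of reasoning in reverse, inserting $\langle a,\epsilon\rangle$-type transitions. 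Since $L^{\epsilon}(\mathcal{A})$ is defined precisely as the stuttering words whose $\epsilon$-reduction lies in $L(\mathcal{A})$, this is really just unwinding definitions, but it is the place where the $\epsilon$-bookkeeping that motivated the whole Section~\ref{sec:Synchronized} apparatus does its work. Everything else is immediate from the quoted closure and synchronization lemmas.
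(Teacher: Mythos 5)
Your proposal is correct and follows essentially the same route as the paper: the paper also forms the synchronized product $\mathcal{T}\rightthreetimes\mathcal{A}$ and invokes Lemma~\ref{lem:synchprod} for $\left<R\right>X$, and obtains $\left<R^{-1}\right>X$ by swapping input and output symbols in $\mathcal{T}$'s transitions. The only cosmetic difference is that the paper explicitly finishes by taking the $\epsilon$-reduction of the product (Lemma~\ref{lem:reduction}) to eliminate the $\epsilon$-labelled transitions inherited from waiting moves of the first head, a bookkeeping step you subsume in the observation that the product is a finite automaton (with $\epsilon$-moves) and hence has a regular language.
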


\begin{proof}
This claim essentially follows from results of Nivat (see \cite{Berstel}). However, using Lemmas \ref{lem:reduction} and \ref{lem:synchprod}, we give a constructive proof, which explicitly produces automata that recognize the resulting regular languages. Let $\mathcal{A}$ be a finite automaton recognizing $X$ and $\mathcal{T}$ be a transducer recognizing $R$. Then, the $\epsilon$-reduction of the synchronized product of $\mathcal{T}$ with $\mathcal{A}$ is an automaton recognizing $\left<R\right>X$; for $\left<R^{-1}\right>X$ we take instead of $\mathcal{T}$ the transducer for $R^{-1}$ obtained from $\mathcal{T}$ by swapping the input and output symbols in the transition relation\footnote{Note that, in general, the resulting automata need not be minimal, because they may have redundant states and transitions.}. \hspace{\fill} $\triangleleft$
\end{proof}

\begin{example}
Consider the automaton $\mathcal{A}$ and transducer $\mathcal{T}$ in Figure \ref{autex}. The language recognized by $\mathcal{A}$ is  $X=1^*\left(1+0^+\right)$ and the relation $R$ recognized by $\mathcal{T}$ is $R = \left\{\left(1^n0,10^n1\right)^m\left(1^k,10^k\right) \mid n,m,k\in\mathbb{N}\right\}
\cup \left\{\left(1^n0,10^n1\right)^m\left(01^k,11^k\right) \mid n,m,k\in\mathbb{N}\right\}$, where $X_1X_2$ denotes the component-wise concatenation of the relations $X_1$ and $X_2$, i.e., $X_1X_2 = \{(u_1u_2,v_1v_2) \mid (u_1,v_1) \in X_1, (u_2,v_2) \in X_2 \}$. For instance, if we take $n=1$, $m=2$ and $k=3$ we obtain that $(10,101)^2(1^3,10^3) = (1010111,1011011000)\in R$ (coming from the first set of the union) and \newline $(10,101)^2(01^3,11^3) = (10100111,1011011111)\in R$ (coming from the second set of that union).

Then, the synchronized product $\mathcal{T}\rightthreetimes\mathcal{A}$ is the finite automaton given in Figure \ref{resultaut} recognizing $\left<R\right>X=0^*+0^*1^+$.
Note that it can be simplified by removing redundant states and edges.

\begin{figure}[h]
\caption{The automaton $\mathcal{A}$ and the transducer $\mathcal{T}$.}
\label{autex}

\vspace{5mm}

\begin{pspicture}(5,4.5)
\rput(0,4){\rnode{A}{$\mathcal{A}:$}}
\pnode(0,2){p0}
\cnodeput(1,2){q1}{$p_1$}
\cnodeput(4,3){q2}{$p_2$}
\cnode(4,3){0.41}{2}
\cnodeput(4,1){q3}{$p_3$}
\cnode(4,1){0.41}{3}
\ncline{->}{p0}{q1}
\ncline{->}{q1}{2}
\naput{$0$}
\ncline{->}{q1}{3}
\naput{$1$}
\nccircle{->}{q1}{0.5cm}
\nbput{$1$}
\nccircle[angleA=270]{->}{2}{0.5cm}
\nbput{$0$}

\rput(6,4){\rnode{T}{$\mathcal{T}:$}}
\pnode(6,3){p0}
\cnodeput(7,3){q1}{$q_1$}
\cnodeput(10,3){q2}{$q_2$}
\cnode(10,3){0.41}{2}
\cnodeput(9,1){q3}{$q_3$}
\cnode(9,1){0.41}{3}
\ncline{->}{p0}{q1}

\nccurve[ncurv=.5,angleB=160,angleA=20]{->}{q1}{2}
\naput{$\epsilon/1$}
\nccurve[ncurv=.5,angleB=340,angleA=200]{->}{2}{q1}
\naput{$0/1$}
\ncline{->}{q1}{3}
\nbput{$0/1$}

\nccircle[angleA=270]{->}{2}{0.5cm}
\nbput{$1/0$}
\nccircle[angleA=270]{->}{3}{0.5cm}
\nbput{$1/1$}

\end{pspicture}
\vspace{-15mm}
\end{figure}
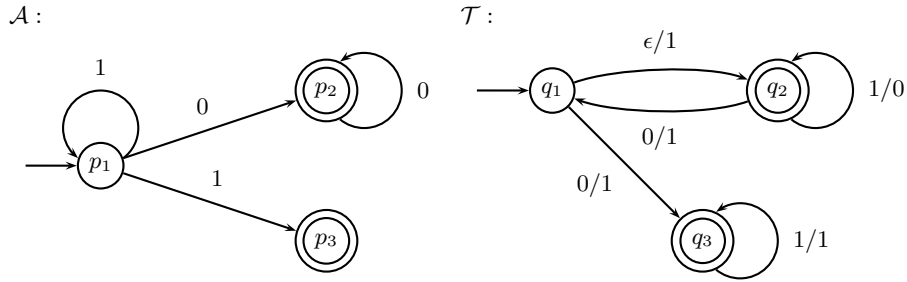

\bigskip

\begin{figure}[h]
\caption{The synchronized product $\mathcal{T}\rightthreetimes\mathcal{A}$ recognizing $\left<R\right>X$.}
\label{resultaut}
\begin{center}
\begin{pspicture}(11,5)

\rput(0,4.3){\rnode{TA}{$\mathcal{T}\rightthreetimes\mathcal{A}:$}}
\pnode(4,0){p0}
\cnodeput(4,1){q1}{$q_1$,$p_1$}
\cnodeput(8.5,1){q2}{$q_1,p_2$}
\cnodeput(10,3){q3}{$q_1,p_3$}
\cnodeput(4,3){q4}{$q_2,p_1$}
\cnodeput(7,3){q5}{$q_2,p_2$}
\cnode(7,3){0.42}{5}
\cnodeput(7,1){q6}{$q_2,p_3$}
\cnode(7,1){0.42}{6}
\cnodeput(1,3){q7}{$q_3$,$p_1$}
\cnodeput(10,1){q8}{$q_3,p_2$}
\cnode(10,1){0.42}{8}
\cnodeput(1,1){q9}{$q_3,p_3$}
\cnode(1,1){0.42}{9}

\ncline{->}{p0}{q1}
\ncline{->}{q1}{q6}
\naput{$\epsilon$}
\ncline{->}{q1}{q9}
\naput{$0$}
\ncline{->}{q1}{q7}
\naput{$0$}
\ncline{->}{q7}{q9}
\naput{$1$}
\ncline{->}{q4}{q5}
\nbput{$1$}
\nccurve[ncurv=.5,angleB=250,angleA=110]{->}{q1}{q4}
\naput{$\epsilon$}
\nccurve[ncurv=.5,angleB=70,angleA=290]{->}{q4}{q1}
\naput{$0$}
\nccurve[ncurv=.5,angleB=155,angleA=25]{->}{q4}{q3}
\naput{$0$}
\nccircle[angleA=250]{->}{q5}{0.55cm}
\nbput{$1$}
\nccircle[angleA=90]{->}{q7}{0.55cm}
\nbput{$1$}

\end{pspicture}
\end{center}
\end{figure}

\end{example}

\begin{theorem} \label{modelcheckKt}
For every formula $\varphi \in$ {\bf K$_t$} and rational Kripke model $\mathcal{M}=\left(\Sigma^*,R,V\right)$, the set $[\![\varphi]\!]_\mathcal{M}$ is a rational language, effectively computable from $\varphi$ and the rational presentation of $\mathcal{M}$.
\end{theorem}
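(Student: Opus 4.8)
The plan is to proceed by induction on the structure of the formula $\varphi$, showing at each stage that $[\![\varphi]\!]_\mathcal{M}$ is regular and that an automaton recognizing it can be effectively constructed. The base case splits in two: for a propositional variable $p$, we have $[\![p]\!]_\mathcal{M} = V(p)$, which is regular by the definition of a rational Kripke model and is given directly by the rational presentation; for the (optional) modal constant $S$ denoting the set of initial states, the same applies since that set is assumed effective and regular. Throughout, I would also intersect with an automaton for $S$ (the regular set of vertices) wherever necessary, so that all extensions are understood as subsets of the actual state space.

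The inductive step handles the Boolean and modal connectives. For negation, $[\![\neg\psi]\!]_\mathcal{M} = \Sigma^* \setminus [\![\psi]\!]_\mathcal{M}$ (or $S \setminus [\![\psi]\!]_\mathcal{M}$), which is regular because $\mathsf{REG}(\Sigma^*)$ is effectively closed under complementation; for conjunction, $[\![\psi_1 \wedge \psi_2]\!]_\mathcal{M} = [\![\psi_1]\!]_\mathcal{M} \cap [\![\psi_2]\!]_\mathcal{M}$, using effective closure under intersection via the product automaton. The two genuinely new cases are the modalities. Since $[\![\langle R\rangle \psi]\!]_\mathcal{M} = \langle R\rangle [\![\psi]\!]_\mathcal{M}$ and $[\![\langle R^{-1}\rangle \psi]\!]_\mathcal{M} = \langle R^{-1}\rangle [\![\psi]\!]_\mathcal{M}$, and by the induction hypothesis $[\![\psi]\!]_\mathcal{M}$ is regular with a known recognizing automaton, Lemma \ref{ratprevreg} applies directly: it tells us $\langle R\rangle X$ and $\langle R^{-1}\rangle X$ are regular for regular $X$, and its constructive proof (via the $\epsilon$-reduction of the synchronized product $\mathcal{T}\rightthreetimes\mathcal{A}$, Lemmas \ref{lem:reduction} and \ref{lem:synchprod}) yields the automaton explicitly. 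The box modalities $[R]$ and $[R^{-1}]$ are then handled either by treating them as defined abbreviations $\neg\langle R\rangle\neg$ and $\neg\langle R^{-1}\rangle\neg$, or by repeating the argument with complementation interleaved.

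Most of the work has already been done in the preceding lemmas, so there is no single hard obstacle in the proof of this theorem itself; the proof is essentially a bookkeeping induction that chains together Lemma \ref{ratprevreg} with the effective Boolean closure properties of regular languages. The one point requiring a little care is ensuring that \emph{effectiveness} is preserved uniformly through the induction — i.e., that at each step we are not merely asserting regularity but actually carrying along a concrete automaton — and that the constructions always respect the ambient state space $\Sigma^*$ (or the regular vertex set $S$), so that the extension computed really is $[\![\varphi]\!]_\mathcal{M}$ and not some superset living outside the model. I would state the induction hypothesis explicitly in the form ``$[\![\psi]\!]_\mathcal{M}$ is regular and an automaton for it is computable from $\psi$ and the presentation of $\mathcal{M}$'' so that effectiveness propagates transparently. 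The complexity of the resulting procedure — in particular the blow-up incurred by iterated complementation and synchronized products — is deferred to Section \ref{sec:Complexity}.
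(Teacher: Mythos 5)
Your proposal is correct and follows essentially the same route as the paper: a structural induction whose atomic case is the definition of a rational valuation, whose Boolean cases use effective closure of regular languages, and whose modal cases invoke Lemma \ref{ratprevreg} (with boxes handled by duality). The extra care you take about carrying a concrete automaton through the induction and relativizing to the state space is sound but not needed beyond what the paper already does, since the theorem is stated for models with state set $\Sigma^*$.
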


\begin{proof}
We prove the claim by induction on $\varphi$.

\begin{enumerate}
\item If $\varphi$ is an atomic proposition, the claim follows from the definition of a rational model.
\item The boolean cases follow from the effective closure of regular languages under boolean operations.
\item If $\varphi=\left<R\right>\psi$ then
    $[\![\varphi]\!]_{\mathcal{M}}=\left<R\right>[\![\psi]\!]_{\mathcal{M}}$, which is regular by the inductive hypothesis and Lemma \ref{ratprevreg}. Likewise for
    the case $\varphi=\left<R^{-1}\right>\psi$.
    \hspace{\fill} $\triangleleft$
\end{enumerate}
\end{proof}

\medskip
We now consider the following algorithmic model checking problems, where the  Kripke model is supposed to be given by some effective presentation:

\begin{enumerate}
\item \emph{Local model checking:} given a Kripke model $\mathcal{M}$, a state $s$ in $\mathcal{M}$, and a formula $\varphi$ of {\bf K$_t$}, determine whether $\mathcal{M},s\models\varphi$.
\item \emph{Global model checking:} given a Kripke model $\mathcal{M}$ and a formula $\varphi$ of {\bf K$_t$}, determine (effectively) the set $[\![\varphi]\!]_{\mathcal{M}}$ of all states in $\mathcal{M}$
where $\varphi$ is true.
\item \emph{Checking satisfiability in a model:} given a Kripke model $\mathcal{M}$ and a formula $\varphi$ of {\bf K$_t$}, determine whether $[\![\varphi]\!]_{\mathcal{M}} \neq \emptyset$.
\end{enumerate}

\begin{corollary}
\label{modelcheckCor}
Local model checking, global model checking, and checking satisfiability in a model, of formulae in {\bf K$_t$} in rational Kripke models are decidable.
\end{corollary}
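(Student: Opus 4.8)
The plan is to reduce all three decision problems to routine automata-theoretic computations on the finite automaton produced by Theorem \ref{modelcheckKt}. First I would invoke that theorem: from $\varphi$ and the rational presentation of $\mathcal{M}$ it yields, effectively, a finite automaton $\mathcal{A}_\varphi$ recognizing $[\![\varphi]\!]_{\mathcal{M}}$ --- built by induction on $\varphi$, using the valuation automata in the base case, the standard boolean constructions on finite automata for the propositional connectives, and the $\epsilon$-reduction of a synchronized product of $\mathcal{A}_\psi$ with the transducer for $R$ (respectively $R^{-1}$) for the two modalities, via Lemmas \ref{lem:reduction}, \ref{lem:synchprod} and \ref{ratprevreg}. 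Everything else then rests on manipulating $\mathcal{A}_\varphi$.

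For \emph{global model checking} I would simply output $\mathcal{A}_\varphi$ (or, if a canonical presentation is wanted, its determinization and minimization) as the required effective description of $[\![\varphi]\!]_{\mathcal{M}}$; nothing further is needed. For \emph{local model checking} at a state $s$, I would observe that $s$ is a word over $\Sigma$ and that $\mathcal{M},s\models\varphi$ holds iff $s\in L(\mathcal{A}_\varphi)$, so the task is the membership problem for a regular language presented by a finite automaton, which is decidable by simulating $\mathcal{A}_\varphi$ on $s$ (tracking the reachable set of states, i.e. running the subset construction on the fly). For \emph{checking satisfiability in} $\mathcal{M}$, I would note that $[\![\varphi]\!]_{\mathcal{M}}\neq\emptyset$ iff $L(\mathcal{A}_\varphi)\neq\emptyset$ iff some accepting state of $\mathcal{A}_\varphi$ is reachable from its initial state, which is decided by a graph-reachability search on the finite transition graph of $\mathcal{A}_\varphi$.

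I do not expect a genuine obstacle: the substance is entirely contained in Theorem \ref{modelcheckKt}, and decidability then follows because membership and non-emptiness for finite automata are decidable. The only points deserving a word of care are bookkeeping ones --- that the intermediate automata are in general nondeterministic and may be determinized whenever convenient, and that if one allows the state space to be a proper regular subset $S\subseteq\Sigma^*$ rather than all of $\Sigma^*$ then one should additionally intersect $\mathcal{A}_\varphi$ with an automaton for $S$ (and, in the local case, check that $s\in S$). The genuinely interesting question --- how costly these procedures are --- is orthogonal to decidability and is addressed in Section \ref{sec:Complexity}.
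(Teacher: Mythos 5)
Your proposal is correct and follows essentially the same route as the paper: decidability of global model checking is read off directly from Theorem \ref{modelcheckKt}, and local model checking and satisfiability reduce respectively to membership and non-emptiness for the regular language $[\![\varphi]\!]_{\mathcal{M}}$. The extra detail you supply (explicit simulation of the automaton, reachability search, intersecting with an automaton for $S$) is sound bookkeeping but not a different argument.
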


\begin{proof}
Decidability of the global model checking follows immediately from Theorem \ref{modelcheckKt}. Then, decidability of the local model checking and of checking  satisfiability in a rational model follow respectively from the decidability of membership in a regular language, and of non-emptiness of a regular language (see e.g., \cite{Martin}). \hspace{\fill} $\triangleleft$
\end{proof}

\section{Complexity}
\label{sec:Complexity}

We will now attempt to analyze the complexity of global model checking a formula in {\bf K$_t$} on a rational Kripke model. Depending on which of these is fixed, we distinguish two complexity measures (see e.g., \cite{KuperVardi}): {\bf formula (expression) complexity} (when the model is fixed and the formula is feeded as input) and {\bf structure complexity} (when the formula is fixed and the model is feeded as input).

\subsection{Normal forms and ranks of formulae}

We will first need to define some standard technical notions.

A formula $\varphi\in$ {\bf K$_t$} is in {\bf negation normal form} if every occurrence of the negation immediately precedes a propositional variable.  Clearly every formula $\varphi\in$ {\bf K$_t$} is equivalent to a formula $\psi\in$ {\bf K$_t$} in negation normal form, of size linear in the size $\varphi$. For the remainder of this section, we will assume that a formula $\varphi$ we wish to model check is in a negation normal form.

The modal rank of a formula counts the greatest number of nested modalities in the formula, while the alternating box (resp., diamond) rank of a formula counts the greatest number of nested alternations of modalities with an outmost box (resp., diamond) in that formula. Formally:

\begin{definition}
The {\bf modal rank} for a formula $\varphi\in$ {\bf K$_t$}, denoted by
$mr\left(\varphi\right)$ is defined inductively as follows:
\begin{enumerate}
\item if $p$ is an atomic proposition, then $mr\left(p\right) = 0$ and $mr\left(\neg p\right)=0$;

\item
    $mr\left(\phi_1\vee\psi_2\right)= mr\left(\phi_1\land\psi_2\right) = \max\left\{mr\left(\psi_1\right),mr\left(\psi_2\right)\right\}$;

\item $mr\left(\vartriangle\psi\right) = mr\left(\psi\right)+1$ where
    $\vartriangle\in\left\{ \left[R\right], \left<R\right>, \left[R^{-1}\right], \left<R^{-1}\right> \right\}$.
\end{enumerate}
\end{definition}

\begin{definition}
The {\bf alternating box rank} and {\bf alternating diamond rank} of a formula
$\varphi\in$ {\bf K$_t$}, denoted respectively by $ar_{\Box}(\varphi)$ and $ar_{\Diamond}(\varphi)$, are defined by simultaneous induction as follows, where $\vartriangle\in \{\Box, \Diamond\}$:

\begin{enumerate}
\item if $p$ is an atomic proposition, then $ar_\vartriangle\left(p\right)= 0$ and $ar_\vartriangle\left(\neg p\right)= 0$;

\item $ar_\vartriangle\left(\psi_1\vee\psi_2\right) = ar_\vartriangle\left(\psi_1\land\psi_2\right) =
    \max\left\{ar_\vartriangle\left(\psi_1\right),ar_\vartriangle\left(\psi_2\right)\right\}$;

\item
    $ar_{\Diamond}\left(\left<R\right>\psi\right)=ar_{\Box}\left(\psi\right)+1$
    and
    $ar_{\Box}\left(\left<R\right>\psi\right)=ar_{\Box}\left(\psi\right)$.

Likewise for $ar_{\Diamond}\left( \left< R^{-1} \right> \psi \right)$ and $ar_{\Box}\left(\left<R^{-1}\right>\psi\right)$.

\item
    $ar_{\Box}\left(\left[R\right]\psi\right)=ar_{\Diamond}\left(\psi\right)+1$
    and
    $ar_{\Diamond}\left(\left[R\right]\psi\right)=ar_{\Diamond}\left(\psi\right)$.

Likewise for $ar_{\Diamond}\left(\left[R^{-1}\right]\psi\right)$ and $ar_{\Box}\left(\left[R^{-1}\right]\psi\right)$.

\end{enumerate}

Finally, the {\bf alternation rank} of $\varphi$, denoted $ar\left(\varphi\right)$ is defined to be \[ar\left(\varphi\right) =
\max\left\{ar_{\Box}\left(\varphi\right),ar_{\Diamond}\left(\varphi\right)\right\}.\]
\end{definition}

For instance,
$ar_{\Box}(\left[R\right](\left<R\right> \left[R\right] p \lor \left[R\right] \left[R^{-1}\right] \lnot q)) = 3$ and
$ar_{\Diamond}(\left[R\right](\left<R\right> \left[R\right] p \lor \left[R\right] \left[R^{-1}\right] \lnot q)) = 2$, hence $ar(\left[R\right](\left<R\right> \left[R\right] p \lor \left[R\right] \left[R^{-1}\right] \lnot q)) = 3$.

\subsection{Formula complexity}

We measure the size of a finite automaton or transducer $\mathcal{M}$ by the number of transition edges in it, denoted $|\mathcal{M}|$.

\begin{proposition} \label{diamondscomplexity}
If $\mathcal{A}$ is an automaton recognizing the regular language $X$ and $\mathcal{T}$ a transducer recognizing the rational relation $R$, then the time complexity of computing an automaton recognizing $\left<R\right>^mX$ is in $O(|\mathcal{T}|^m|\mathcal{A}|)$.
\end{proposition}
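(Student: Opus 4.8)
The plan is to argue by induction on $m$, iterating the construction underlying Lemma~\ref{ratprevreg}. For $m=0$ there is nothing to compute: the automaton is $\mathcal{A}$ itself, of size $|\mathcal{A}| = O(|\mathcal{T}|^0|\mathcal{A}|)$. For the inductive step I would write $\left<R\right>^m X = \left<R\right>\bigl(\left<R\right>^{m-1}X\bigr)$, take the automaton $\mathcal{A}_{m-1}$ recognizing $\left<R\right>^{m-1}X$ supplied by the induction hypothesis (of size and construction time $O(|\mathcal{T}|^{m-1}|\mathcal{A}|)$), and, exactly as in the proof of Lemma~\ref{ratprevreg}, build an automaton for $\left<R\right>^m X$ as the synchronized product $\mathcal{T}\rightthreetimes\mathcal{A}_{m-1}$ (followed, if an $\epsilon$-free automaton is wanted, by an $\epsilon$-reduction in the sense of Lemma~\ref{lem:reduction}).

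The routine part is checking that a single $\left<R\right>$-step multiplies the size by only a factor $O(|\mathcal{T}|)$. Inspecting the definition of $\mathcal{T}\rightthreetimes\mathcal{B}$, every transition of the product is produced either by pairing a transition $\left(p_\mathcal{T}^1,x,y,p_\mathcal{T}^2\right)$ of $\mathcal{T}$ with $y\in\Sigma$ with some $y$-transition of $\mathcal{B}$, or by pairing a transition $\left(p_\mathcal{T}^1,x,\epsilon,p_\mathcal{T}^2\right)$ of $\mathcal{T}$ with a state of $\mathcal{B}$; hence $|\mathcal{T}\rightthreetimes\mathcal{B}| = O(|\mathcal{T}|\cdot|\mathcal{B}|)$ and the product is computable in time $O(|\mathcal{T}|\cdot|\mathcal{B}|)$. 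Applying this with $\mathcal{B}=\mathcal{A}_{m-1}$ gives size $O(|\mathcal{T}|^m|\mathcal{A}|)$ for the $m$-th step, and summing the per-step costs gives a geometric series $\sum_{k=1}^{m}O(|\mathcal{T}|^k|\mathcal{A}|)=O(|\mathcal{T}|^m|\mathcal{A}|)$, dominated by its last term.

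The point that needs care --- and what I expect to be the main obstacle --- is the $\epsilon$-reduction. Applied bluntly, the construction of Lemma~\ref{lem:reduction} adds a transition for every pair consisting of an $\epsilon$-path into a state and a non-$\epsilon$ edge out of it, so even after consolidating parallel edges the edge count of the reduced automaton can be quadratic in its number of states; performing a reduction at each of the $m$ iterations would destroy the bound. The remedy is not to reduce between iterations but to carry the $\epsilon$-transitions along, extending the synchronized product so that its second argument may itself contain $\epsilon$-transitions: a $\gamma$-transition of $\mathcal{B}$ with $\gamma\in\Sigma\cup\{\epsilon\}$ is copied into the product over all states of $\mathcal{T}$, which adds at most $|Q_\mathcal{T}|$ times the number of $\epsilon$-edges of $\mathcal{B}$ and so preserves the $O(|\mathcal{T}|\cdot|\mathcal{B}|)$ step bound. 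Iterating this $m$ times yields an automaton (with $\epsilon$-transitions) recognizing $\left<R\right>^m X$, of size $O(|\mathcal{T}|^m|\mathcal{A}|)$ and computed within that time; alternatively, one observes that all $\epsilon$-transitions of $\mathcal{T}\rightthreetimes\mathcal{B}$ originate from the boundedly many $\epsilon$-input transitions of $\mathcal{T}$, so that the reachability computation driving a final reduction is governed by $|\mathcal{T}|$ rather than by the full product. Either way the construction stays within time $O(|\mathcal{T}|^m|\mathcal{A}|)$, which completes the induction.
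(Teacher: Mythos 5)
Your proof is correct and follows essentially the same route as the paper's: iterate the synchronized product construction $m$ times, each step multiplying size and time by a factor $O(|\mathcal{T}|)$. In fact you are more careful than the paper's two-line argument, which silently glosses over the point you flag --- that performing the $\epsilon$-reduction of Lemma~\ref{lem:reduction} between iterations could inflate the edge count beyond $O(|\mathcal{T}|\cdot|\mathcal{B}|)$ per step --- and your fix (carry $\epsilon$-transitions through by letting the second argument of the product be an NFA with $\epsilon$-moves, or observe that all $\epsilon$-edges of the product stem from the $\epsilon$-input transitions of $\mathcal{T}$) is exactly what is needed to make the stated bound rigorous.
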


\begin{proof}
The size of the synchronized product $\mathcal{T}\rightthreetimes\mathcal{A}$ of $\mathcal{T}$ and $\mathcal{A}$ is bounded above by $|\mathcal{T}||\mathcal{A}|$ and it can be computed in time $O(|\mathcal{T}||\mathcal{A}|)$. The claim now follows by
iterating that procedure $m$ times. \hspace{\fill} $\triangleleft$
\end{proof}

However, we are going to show that the time complexity of computing an automaton recognizing $\left[R\right]X$ is far worse.

For a regular language $X$ recognized by an automaton $\mathcal{A}$, we define
$R_X=\left\{\left(u,\epsilon\right)|u\in X\right\}$.  A transducer $\mathcal{T}$ recognizing $R_X$ can be constructed from $\mathcal{A}$ by simply replacing every edge $\left(q,x,p\right)$ in $\mathcal{A}$ with the edge $\left(q,x,\epsilon,p\right)$.

\begin{lemma} \label{complement}
Let $X$ be a regular language.  Then the complementation $\overline{X}$ of $X$ equals $\left[R_X\right]\emptyset $.
\end{lemma}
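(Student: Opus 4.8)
The plan is to unwind the semantics of the modal box over the relation $R_X$ directly. Recall that in any Kripke model, $u \in [\![[R_X]\psi]\!]$ precisely when every $R_X$-successor of $u$ satisfies $\psi$. So I would compute $[R_X]\emptyset$ by asking: for which words $u$ is it the case that every $R_X$-successor of $u$ lies in $\emptyset$? Since no word lies in $\emptyset$, this holds exactly when $u$ has \emph{no} $R_X$-successor at all, i.e., when there is no pair $(u,w) \in R_X$ for any $w$.

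Next I would use the definition $R_X = \{(u,\epsilon) \mid u \in X\}$. The only candidate successor of $u$ under $R_X$ is the empty word $\epsilon$, and $(u,\epsilon) \in R_X$ if and only if $u \in X$. Hence $u$ has an $R_X$-successor if and only if $u \in X$, and therefore $u$ has \emph{no} $R_X$-successor if and only if $u \notin X$, i.e., $u \in \overline{X}$. Combining this with the previous paragraph gives $[\![[R_X]\emptyset]\!] = \overline{X}$, which is the claim. (Strictly speaking the statement is an identity of languages, with $[R_X]\emptyset$ understood as the extension of that formula in the relevant model; I would make this reading explicit.)

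There is really no serious obstacle here — the lemma is essentially a definitional observation, the point being that the modal constant $\emptyset$ (the always-false proposition) turns a universal modality into the assertion ``has no successor,'' and the relation $R_X$ has been rigged so that ``has an $R_X$-successor'' coincides with membership in $X$. The one thing worth being careful about is that $R_X$ is genuinely rational (so that $[R_X]\emptyset$ makes sense within the rational-Kripke-model framework): this is immediate from the remark preceding the lemma, namely that replacing each edge $(q,x,p)$ of an automaton for $X$ by $(q,x,\epsilon,p)$ yields a transducer recognizing $R_X$. I would state the chain of equivalences $u \in [\![[R_X]\emptyset]\!] \iff \neg\exists w\,(u\,R_X\,w) \iff u \notin X \iff u \in \overline{X}$ and conclude.
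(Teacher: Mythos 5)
Your proof is correct and matches the paper's intent: the paper dismisses this lemma with ``Routine verification,'' and your chain of equivalences $u \in [\![\left[R_X\right]\emptyset]\!] \iff \neg\exists w\,(u\,R_X\,w) \iff u \notin X$ is exactly that routine verification spelled out.
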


\begin{proof}
Routine verification. \hspace{\fill} $\triangleleft$
\end{proof}

Consequently, computing $\left[R_X\right]\emptyset$ cannot be done in less than exponential time in the size of the (non-deterministic) automaton $\mathcal{A}$ for $X$. This result suggests the following conjecture.

\begin{conjecture} \label{conj}
The formula complexity of global model checking of a {\bf K$_t$}-formula is
non-elementary in terms of the alternating box rank of the formula.
\end{conjecture}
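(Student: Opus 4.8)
The plan is to build, for each $k \geq 1$, a family of formulae $\chi_k \in$ \textbf{K$_t$} with $ar_\Box(\chi_k)$ growing like $k$, together with a single fixed rational Kripke model $\mathcal{M}$, such that the smallest automaton recognizing $[\![\chi_k]\!]_\mathcal{M}$ has size a $k$-fold iterated exponential in $|\chi_k|$. Since every model checking procedure must at least write down (an automaton for) its output, this forces non-elementary formula complexity. The engine of the construction is Lemma~\ref{complement}: a single box modality over a transducer of the form $R_X$ realizes complementation, and complementation of non-deterministic automata is exponential; alternating this with diamonds (which, by Proposition~\ref{diamondscomplexity}, only blow up polynomially, but crucially can re-introduce non-determinism) should stack the exponentials.

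First I would fix the fixed model. A natural candidate is to work over a two-letter alphabet and take $R$ to be a rational relation rich enough that, from a regular "seed" language $X_0$ encoded via the propositional variables, one can carve out by successive applications of $\langle R\rangle$, $[R]$, $\langle R^{-1}\rangle$, $[R^{-1}]$ a sequence of languages $X_0 \supsetneq X_1 \supsetneq \cdots$ whose minimal-automaton sizes tower. Concretely I would try to simulate, inside one rational model, the effect of the $R_X$ transducers of Lemma~\ref{complement}: one wants a relation that relates a word $w$ to (a coding of) the empty word precisely when $w$ lies in some parametrized regular set, so that a box over it complements. The cleanest route is probably to use the standard "$n$-th level of the exponential hierarchy is captured by $n$ alternations of projection and complement" phenomenon from automata theory (the classical lower bounds on star-height / quantifier alternation over word structures), and to observe that each complementation step is a box and each projection (existential) step is a diamond in our language. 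Then $\chi_k$ is just the $k$-fold alternation $[R]\langle R\rangle[R]\langle R\rangle\cdots$ applied to an atomic proposition, whose alternating box rank is exactly $k$ by the definition of $ar_\Box$.

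The key steps, in order, would be: (i) exhibit the fixed rational transducer(s) and regular valuation; (ii) prove by induction on $k$ that $[\![\chi_k]\!]_\mathcal{M} = X_k$ for an explicitly described language $X_k$, using the semantics $[\![[R]\psi]\!] = \overline{\langle R\rangle\overline{[\![\psi]\!]}}$ and $[\![\langle R\rangle\psi]\!] = \langle R\rangle[\![\psi]\!]$, together with Lemma~\ref{ratprevreg} and Lemma~\ref{complement}; (iii) prove a matching lower bound on the minimal automaton size of $X_k$ — this is where one invokes a Myhill–Nerode / fooling-set argument showing the number of residuals of $X_k$ is $k$-fold exponential; (iv) conclude that any model checker, which must output such an automaton, runs in time non-elementary in $ar_\Box(\chi_k)$, since $|\chi_k|$ is linear in $k$.

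I expect step (iii) — pinning down the $k$-fold exponential lower bound on $|X_k|$ — to be the main obstacle, and also the delicate tension in step (i): the model must be fixed once and for all, so all of the "parameters" that drive the blow-up have to be smuggled in through the formula alone, via the interplay of forward and backward modalities rather than through an increasing family of transducers. Getting a single relation $R$ that is simultaneously rational, simple to write down, and expressive enough to host the full alternating projection/complement hierarchy is the crux; a plausible fallback, if a single $R$ proves too constrained, is to prove the weaker but still non-elementary statement for \emph{structure} complexity or for a slight strengthening of the language, but the conjecture as stated demands the fixed-model version, so the honest target is the construction above.
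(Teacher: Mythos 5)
The statement you are trying to prove is explicitly labelled a \emph{conjecture} in the paper: the authors do not prove it, they only motivate it. Their evidence is Lemma~\ref{complement} ($[R_X]\emptyset = \overline{X}$, so a single box can force an NFA complementation and hence one exponential), together with a remark at the end of Section~7.3 that a bounded-exponential translation from extended star-free regular expressions (non-elementary by Meyer--Stockmeyer) into the $\overrightarrow{\varphi}$-free fragment of \textbf{WPDL}, which is expressively equivalent to \textbf{K$_t$}, \emph{would} prove the conjecture. So there is no paper proof to match your attempt against, and your proposal should be judged on whether it actually closes the conjecture. It does not: it is a plan whose two decisive steps are exactly the ones you yourself flag as open. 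First, Lemma~\ref{complement} puts the parameter $X$ into the \emph{transducer} $R_X$, not into the formula; to get formula complexity you need one fixed rational relation $R$ over which a growing family of formulae $\chi_k$ provably generates a tower of blow-ups, and no such $R$ is exhibited. This is not a presentational gap --- it is the reason the statement is a conjecture. Second, the claimed $k$-fold exponential lower bound on the number of Myhill--Nerode residuals of $[\![\chi_k]\!]_\mathcal{M}$ is asserted by analogy with the alternation-of-projection-and-complement phenomenon, but the languages $X_k$ are never defined, so there is nothing to run a fooling-set argument on. Note also that upper-bound intuition (each box \emph{may} cost an exponential) gives no lower bound: the determinizations could collapse, so you genuinely need the output automaton itself to be provably large, which is precisely step (iii).

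Your instincts about where the hardness should come from are sound and consistent with the authors' own suggested route: the non-elementary succinctness of alternating complementation and projection over words (equivalently, of extended star-free regular expressions) is the natural source, and boxes/diamonds are the natural carriers of complement/projection. If you want to turn the plan into a proof, the most promising concrete target is the one the paper names: construct a translation with at most elementary blow-up from extended star-free regular expressions into \textbf{K$_t$} over a fixed rational model (in effect, eliminating the $\overrightarrow{\varphi}$ construct of \textbf{WPDL} by encoding concatenation through a fixed rational relation), and then import the Meyer--Stockmeyer lower bound. As it stands, however, the proposal is an outline of the open problem rather than a solution to it.
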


\subsection{Structure complexity}

Next we analyze the {\bf structure complexity}, i.e. the complexity of global model checking a fixed formula $\varphi\in$ {\bf K$_t$} on an input rational Kripke model. Here the input is assumed to be the transducer and automata presenting the model.

Fix a formula $\varphi\in$ {\bf K$_t$} in negation normal form, then for any input rational Kripke model $\mathcal{M}$ there is a fixed number of operations to perform on the input transducer and automata that can lead to subsequent exponential blowups of the size of the automaton computing $[\![\varphi]\!]_{\mathcal{M}}$. That number is bounded
by the modal rank $mr\left(\varphi\right)$ of the formula $\varphi$, and therefore the structure complexity is bounded above by an exponential tower of a height not exceeding that modal rank:

\large
$$2^{\cdots^{(mr\left(\varphi\right) \textrm{\ \tiny  times})^{\cdots^{2^{|\mathcal{T}||\mathcal{A}|}}}}}$$
\normalsize

However, using the alternation rank of $\varphi$ and Proposition \ref{diamondscomplexity}
we can do better.

\begin{proposition}
The structure complexity of global model checking for a fixed formula $\varphi\in$ {\bf K$_t$} on an input rational Kripke model $\mathcal{M}$, presented by the transducer and automata $\left\{\mathcal{T},\mathcal{A}_1,\ldots,\mathcal{A}_n\right\}$, is bounded above by
\large
$$2^{\cdots^{(ar\left(\varphi\right) \textrm{\ \tiny  times})^{\cdots^{2^{P\left(|\mathcal{T}|\right)}}}}}$$
\normalsize where $P\left(|\mathcal{T}|\right)$ is a polynomial in $|\mathcal{T}|$ with leading coefficient not greater that $n2^c$ where $c \leq \max \{|\mathcal{A}_i| \mid i=1,\ldots n\}$ and degree no greater than $mr\left(\varphi\right)$.

\end{proposition}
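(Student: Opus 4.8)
\textit{(Proof sketch.)}
The bound displayed just above charges one subset construction (one exponential blow-up) to every single modality of $\varphi$, which is why its height is governed by $mr(\varphi)$. The plan is to refine the accounting so that an exponential is charged only to each \emph{alternation} between a maximal block of diamonds and a maximal block of boxes, and so that all the remaining, merely polynomial, cost is pushed into the base of the tower. The tools are at hand: by the synchronized-product construction (Lemma~\ref{ratprevreg}, Proposition~\ref{diamondscomplexity}) each application of $\langle R\rangle$ or $\langle R^{-1}\rangle$ to an automaton of size $s$ produces, after $\varepsilon$-reduction (Lemma~\ref{lem:reduction}), an automaton of size at most $|\mathcal{T}|\,s$, with \emph{no} subset construction; whereas $[R]\psi$ can only be realized via $[R]\psi\equiv\lnot\langle R\rangle\lnot\psi$, so each box forces a complementation, and complementing a non-deterministic automaton of size $s$ costs a subset construction, i.e.\ size roughly $2^{s}$.

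The refinement is to keep every intermediate language represented \emph{lazily}, by an automaton recognizing either $[\![\psi]\!]_{\mathcal{M}}$ or its complement, whichever makes the next modality cheap. Processing $\varphi$ (assumed in negation normal form) bottom-up, maintain for each subformula $\psi$ a size $g^{+}(\psi)$ of an automaton for $[\![\psi]\!]_{\mathcal{M}}$ and a size $g^{-}(\psi)$ of an automaton for its complement, with $g^{+}(\langle R\rangle\psi)\le|\mathcal{T}|\,g^{+}(\psi)$ and $g^{-}([R]\psi)\le|\mathcal{T}|\,g^{-}(\psi)$ (no subset construction, by the first tool above), versus $g^{-}(\langle R\rangle\psi)\le 2^{|\mathcal{T}|\,g^{+}(\psi)}$ and $g^{+}([R]\psi)\le 2^{|\mathcal{T}|\,g^{-}(\psi)}$ (a subset construction); De~Morgan handles $\vee,\wedge$ (union of the $g^{+}$-automata and product of the $g^{-}$-automata for $\vee$, dually for $\wedge$); at the leaves $g^{+}(p)\le|\mathcal{A}_i|$, $g^{-}(\neg p)\le|\mathcal{A}_i|$, while $g^{-}(p),g^{+}(\neg p)\le 2^{|\mathcal{A}_i|}$; and $\langle R^{-1}\rangle,[R^{-1}]$ are treated via the inverse transducer, of the same size. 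Thus a subset construction is incurred only when the polarity one needs clashes with the friendly one.

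Now fix a root-to-leaf branch of $\varphi$ and split the modalities along it into maximal blocks $B_1,\dots,B_r$, each all-diamond or all-box. Running the recursion toward $[\![\varphi]\!]_{\mathcal{M}}=g^{+}(\varphi)$, a subset construction is performed at most once per block --- once for each $B_i$ if $B_1$ is a box block, and once for each $B_i$ with $i\ge 2$ if $B_1$ is a diamond block; moreover, reaching the leaf in the ``wrong'' polarity forces at most a complementation of one of the \emph{fixed} automata $\mathcal{A}_i$, of size $\le 2^{c}$, which is absorbed into the base and is not a layer of the tower. Hence along the branch the computed automaton has size bounded by an exponential tower of height $\le r$ over a value polynomial in $|\mathcal{T}|$. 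By the defining recursion of $ar_{\Box}$ and $ar_{\Diamond}$, each maximal modality block encountered along a branch contributes exactly one to $\max\{ar_{\Box},ar_{\Diamond}\}$ on that branch (as illustrated by the worked example above), so $r\le ar(\varphi)$ on every branch; and since branches are recombined only at $\vee,\wedge$ nodes, whose union/product constructions at worst multiply sizes and so do not raise the tower height, $[\![\varphi]\!]_{\mathcal{M}}$ is computed by an automaton --- and, in view of Proposition~\ref{diamondscomplexity} and the cost of a subset construction, in time --- bounded by an exponential tower of height at most $ar(\varphi)$ over a polynomial $P(|\mathcal{T}|)$.

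It remains to estimate $P$. Below the deepest subset construction along any branch everything is ``friendly'': applications of $\langle R\rangle/\langle R^{-1}\rangle$ (each a factor $|\mathcal{T}|$, nested at most $mr(\varphi)$ deep along the branch), boolean combinations, and at the leaves the fixed automata $\mathcal{A}_i$ or their complements of size $\le 2^{c}$ with $c\le\max_{i}|\mathcal{A}_i|$. Collecting these yields a polynomial $P(|\mathcal{T}|)$ of degree $\le mr(\varphi)$ (governed by the nesting of diamond applications) with leading coefficient $\le n2^{c}$ (governed by the at most $n$ leaf contributions, each $\le 2^{c}$); substituting it for the base of the tower gives the claim. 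I expect the genuinely delicate part of a full proof to be exactly this last accounting --- checking that the lazy ``co-automaton'' discipline never smuggles in an extra subset construction through a De~Morgan step, and that the boolean branching and repeated occurrences of atoms only inflate the \emph{coefficient} of $P$, keeping its degree tied to $mr(\varphi)$ and contributing no further layers to the tower, whose height stays pinned to the alternation rank. \hspace{\fill} $\triangleleft$
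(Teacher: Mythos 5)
Your proposal is correct and takes essentially the same route as the paper: the paper's observation that ``nesting of any number of boxes can be reduced by double complementation to nesting of diamonds,'' so that a determinization is charged only once per alternation, is precisely your lazy $g^{+}/g^{-}$ polarity discipline, and your accounting for the base of the tower (degree $\le mr(\varphi)$ from nested synchronized products, leading coefficient $\le n2^{c}$ from the leaf automata) matches the paper's. Your write-up is in fact more explicit than the paper's sketch, and the delicate point you flag at the end --- that boolean product constructions over parallel branches could in principle push the degree of $P$ in $|\mathcal{T}|$ above the nesting depth $mr(\varphi)$ --- is glossed over in the paper's own proof as well.
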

\begin{proof}
The number of steps in the computation of $[\![\varphi]\!]_{\mathcal{M}}$, following the structure of $\varphi$, that produce nested exponential blow-ups can be bounded by the alternation rank, since nesting of any number of diamonds does not cause an exponential blow-up, while nesting of any number of boxes can be reduced by double complementation to nesting of diamonds;  e.g., $\left[R\right](\left[R\right] \left[R\right] p \lor \left[R^{-1}\right] \lnot q)$ can be equivalently re-written as $\lnot \left<R\right>(\left<R\right> \left<R\right> \lnot p \land \left<R^{-1}\right> q)$. The initial
synchronized product construction (when a diamond or box is applied to a boolean formula) produces an automaton of size at most $2^c |\mathcal{T}|$, the number of nested product constructions is bounded above by $mr\left(\varphi\right)$, and each of these multiplies the size of the current automaton by $|\mathcal{T}|$. In the worst case, all alternations would take place after all product constructions, hence the upper bound.
\hspace{\fill} $\triangleleft$
\end{proof}

\section{Model checking extensions of {\bf K$_t$} on rational models}
 \label{sec:OtherExt}

\subsection{Model checking hybrid extensions of {\bf K$_t$}}

A major limitation of the basic modal language is its inability to refer explicitly to states in a Kripke model, although the modal semantics evaluates modal formulae at states. Hybrid logics provide a remedy for that problem. We will only introduce some basic hybrid logics of interest here; for more details consult e.g., \cite{BRV,blackburn00representation}.

The  \emph{basic hybrid tense logic} {\bf H$_t$} extends the basic tense logic {\bf K$_t$} with a set of new atomic symbols $\Theta$ called \emph{nominals} which syntactically form a second type of atomic formulae, which are evaluated in Kripke models in \emph{singleton sets} of states. The unique state in the valuation of a nominal is called its \emph{denotation}.  Thus, nominals can be used in {\bf H$_t$} to refer directly to states.

Here is the formal definition of the set of formulae of {\bf H$_t$}:
\[\varphi=p \mid i \mid  \neg\varphi  \mid \varphi\vee \phi \mid \left<R\right>\varphi \mid \left<R^{-1}\right>\varphi,\]
where $i\in\Theta$ and $p\in\Phi$.

The basic hybrid logic {\bf H$_t$} can be further extended to {\bf H$_t\left(@\right)$} by adding the \emph{satisfaction operator} $@$, where the formula $@_i\varphi$ means `\emph{$\varphi$ is true at the denotation of $i$}'. A more expressive extension of {\bf H$_t$} is {\bf H$_t(U)$} involving the \emph{universal modality} with semantics $\mathcal{M}, v \models [U] \varphi$ iff $\mathcal{M}, w \models \varphi$ for every $w \in \mathcal{M}$. The operator $@$ is definable in {\bf H$_t(U)$} by $@_i \varphi :=
[U](i \rightarrow \varphi)$. Moreover, {\bf H$_t$} can be extended with the more expressive \emph{difference modality} $\langle D \rangle$ (and its dual $[D]$), where $\mathcal{M},v\models \langle D \rangle\varphi$ iff there exists a $w\ne v$ such that $\mathcal{M},w\models\varphi$. Note that $[U]$ is definable in {\bf H$_t(D)$} by $[U]\varphi:=\varphi\wedge [D]\varphi$.

Yet another extension of {\bf H$_t\left(@\right)$} is {\bf
H$_t\left(@,\downarrow\right)$} which also involves \emph{state variables} and
\emph{binders} that bind these variables to states. Thus, in addition to {\bf
H$_t\left(@\right)$}, formulae also include $\downarrow \!\! x.\varphi$ for $x$ a state variable. For a formula $\varphi$ possibly containing free occurrences of a state variable $x$, and $w$ a state in a given model, let $\varphi\left[x\leftarrow i_w\right]$ denote the result of substitution of all free occurrences of $x$ by a nominal $i_w$ in $\varphi$, where $w$ is the denotation of $i_w$.  Then the semantics of $\downarrow \!\! x.\varphi$ is defined by: $\mathcal{M},w\models\downarrow  \!\! x.\varphi$ iff
$\mathcal{M},w\models\varphi\left[x\leftarrow i_w\right]$.

\begin{proposition}
For  every formula $\varphi$ of the hybrid language {\bf H$_t(D)$} (and therefore, of {\bf H$_t\left(@\right)$} and of {\bf H$_t\left(U\right)$}) and every rational Kripke model $\mathcal{M}$, the set $[\![\varphi]\!]_\mathcal{M}$ is an effectively computable rational language.
\end{proposition}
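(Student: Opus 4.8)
The plan is to extend the induction from the proof of Theorem \ref{modelcheckKt}, adding to the inductive hypothesis the constructive computation of the extension as a regular set, and handling the new hybrid constructs one by one. Since $@$ and $[U]$ are definable in \textbf{H$_t$(D)}, it suffices to treat \textbf{H$_t$(D)} directly; the cases for atomic propositions, booleans, $\langle R\rangle$ and $\langle R^{-1}\rangle$ are exactly as in Theorem \ref{modelcheckKt}, using Lemma \ref{ratprevreg}. What remains is to show that the class of regular subsets of $\Sigma^*$ is effectively closed under the operations corresponding to nominals and to the difference modality $\langle D\rangle$.

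First I would deal with \emph{nominals}: the valuation $V$ assigns to each nominal $i\in\Theta$ a singleton set $\{w_i\}\subseteq\Sigma^*$, and a single word is trivially a regular language, with a linear-size automaton recognizing it; so the base case for nominals is immediate (this is really part of the definition of a rational hybrid model — one simply stipulates that the denotation word is given as part of the finite presentation). Second, the \emph{difference modality}: given that $[\![\psi]\!]_{\mathcal M}=:Y$ is regular with automaton $\mathcal B$, I want $[\![\langle D\rangle\psi]\!]_{\mathcal M}=\{u\in\Sigma^*\mid \exists w\ne u\,(w\in Y)\}$. The key observation is the case split: if $|Y|\ge 2$ then $\langle D\rangle\psi$ holds everywhere, so the extension is $\Sigma^*$; if $Y=\{w\}$ is a singleton then the extension is $\Sigma^*\setminus\{w\}$; if $Y=\emptyset$ the extension is $\emptyset$. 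All three possibilities are regular, and which one occurs is decidable: we can test emptiness of $L(\mathcal B)$ and we can test whether $|L(\mathcal B)|\ge 2$ (e.g., check whether $L(\mathcal B)$ is finite, and if so enumerate, or intersect with suitable automata), and in the singleton case extract the unique word $w$ and build an automaton for its complement. Hence $[\![\langle D\rangle\psi]\!]_{\mathcal M}$ is effectively regular, and the induction goes through. Finally, $[U]\varphi$ and $@_i\varphi$ need no separate treatment since $[U]\varphi\equiv\varphi\wedge[D]\varphi$ and $@_i\varphi\equiv[U](i\to\varphi)$, both already covered.

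\textbf{The main obstacle} is the treatment of $\langle D\rangle$: unlike $\langle R\rangle$, it is not induced by a rational relation on $\Sigma^*$ (the ``difference relation'' $\{(u,v)\mid u\ne v\}$ is not rational, since rational relations are not closed under complement), so Lemma \ref{ratprevreg} does not apply directly. The whole point is that one does not need rationality of the accessibility relation here: decidability of the finiteness/cardinality questions about $L(\mathcal B)$, together with the elementary case analysis above, replaces it. One should also note that this argument is specific to the \emph{model-checking} setting, where $[\![\psi]\!]_{\mathcal M}$ is a concrete regular set we have already computed; it does not give any uniform transducer-level construction. A minor additional care is needed to make the presentation of a rational \emph{hybrid} Kripke model precise — the denotations of nominals must be part of the finite input — but once that is fixed, no further difficulty arises, and the binder $\downarrow$ of \textbf{H$_t(@,\downarrow)$} is deliberately left out of this proposition precisely because binding to an arbitrary state cannot be done finitely in general.
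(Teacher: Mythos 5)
Your proof is correct, but it takes a genuinely different route from the paper's for the $\langle D\rangle$ case, and it contains one factual error worth correcting. The paper's proof simply observes that the difference relation $D=\{(u,v)\mid u\neq v\}$ \emph{is} a rational relation --- it is the complement of the equality relation, which is automatic (synchronized rational), and automatic relations are closed under complement; since automatic relations are a subclass of rational ones, $D$ is rational and $\langle D\rangle$ is handled by Lemma \ref{ratprevreg} exactly like $\langle R\rangle$, with no special casing. Your assertion that $D$ ``is not rational, since rational relations are not closed under complement'' is a non sequitur: non-closure of the class under an operation does not show that any particular complement falls outside the class, and here it does not (one can also directly build a transducer for $\neq$: read the two words in lockstep until either a mismatched pair of letters is found or one word ends while the other continues). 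That said, your cardinality-based case analysis ($|Y|\geq 2$ gives $\Sigma^*$, $Y=\{w\}$ gives $\Sigma^*\setminus\{w\}$, $Y=\emptyset$ gives $\emptyset$, with each case decidable from the automaton for $Y$) is a valid and fully effective substitute, and arguably more elementary since it avoids transducers entirely for this connective; its drawback is that it exploits a semantic peculiarity of $D$ and does not generalize to other relational modalities, whereas the paper's observation folds $D$ into the existing uniform machinery. Your treatment of nominals and of the definability of $[U]$ and $@$ matches the paper.
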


\begin{proof}
The claim follows from Theorem \ref{modelcheckKt} since the valuations of nominals, being singletons, are rational sets, and the difference relation $D$ is a rational relation. The latter can be shown by explicitly constructing a transducer recognizing $D$ in a given rational set, or by noting that it is the complement of the automatic relation of equality, hence it is automatic itself, as the family of automatic relations is closed under complements (see e.g., \cite{KhoussainovNerode} or \cite{BlumensathGraedel}). \hspace{\fill} $\triangleleft$
\end{proof}

\begin{corollary}
Global  and local model checking, as well as satisfiability checking, of formulae of the hybrid language {\bf H$_t(D)$} (and therefore, of {\bf H$_t\left(@\right)$} and {\bf H$_t\left(U\right)$}, too) in rational Kripke models are decidable.
\end{corollary}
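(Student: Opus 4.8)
The plan is to derive this corollary from the preceding Proposition in exactly the same way Corollary~\ref{modelcheckCor} was derived from Theorem~\ref{modelcheckKt}. The Proposition tells us that for every formula $\varphi$ of {\bf H$_t$}$(D)$ and every rational Kripke model $\mathcal{M}$, the set $[\![\varphi]\!]_\mathcal{M}$ is a rational (i.e., regular) language, and moreover that an automaton recognizing it can be effectively computed from $\varphi$ and the rational presentation of $\mathcal{M}$. This single fact is the engine behind all three decidability claims; the corollary is essentially just a bookkeeping exercise over the three model checking problems listed before Corollary~\ref{modelcheckCor}, now instantiated for the richer language.

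First I would observe that global model checking of an {\bf H$_t$}$(D)$-formula $\varphi$ in a rational Kripke model $\mathcal{M}$ is immediately settled: by the Proposition we can effectively produce a finite automaton recognizing $[\![\varphi]\!]_\mathcal{M}$, which is precisely the output required by the global model checking task. Next, for local model checking --- deciding whether $\mathcal{M}, s \models \varphi$ for a given state $s$ --- I would note that $\mathcal{M}, s \models \varphi$ holds iff the word $s$ belongs to $[\![\varphi]\!]_\mathcal{M}$, and membership of a given word in the language of a given finite automaton is decidable (see e.g.\ \cite{Martin}). Finally, satisfiability checking in $\mathcal{M}$ --- deciding whether $[\![\varphi]\!]_\mathcal{M} \neq \emptyset$ --- reduces to the non-emptiness problem for the finite automaton computed for $[\![\varphi]\!]_\mathcal{M}$, which is likewise decidable (\emph{ibid}). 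The extension from {\bf H$_t$}$(D)$ to {\bf H$_t$}$(@)$ and {\bf H$_t$}$(U)$ requires no extra work, since, as already noted in the text, $@$ is definable in {\bf H$_t$}$(U)$ and $[U]$ is definable in {\bf H$_t$}$(D)$, so these logics are (semantically) sublanguages of {\bf H$_t$}$(D)$ and the Proposition already covers them.

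Honestly, there is no real obstacle here: the entire mathematical content has been discharged by the Proposition, whose proof in turn leaned on the facts that singleton valuations of nominals are rational sets and that the difference relation $D$ is rational (being the complement of the automatic equality relation, and automatic relations are closed under complement). The only point that warrants a sentence of care is making sure the reduction to automata-theoretic decision procedures is genuinely \emph{effective} and not merely an existence statement --- but this is guaranteed because the Proposition asserts effective computability of the recognizing automaton, and word-membership and non-emptiness are classical decidable problems for finite automata. So the proof is a short paragraph mirroring Corollary~\ref{modelcheckCor}, citing the Proposition for global model checking and standard automata theory for the other two tasks.
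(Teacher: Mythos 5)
Your proof is correct and follows exactly the route the paper intends: the corollary is stated without an explicit proof precisely because it follows from the preceding Proposition in the same way that Corollary~\ref{modelcheckCor} follows from Theorem~\ref{modelcheckKt}, namely by computing the automaton for $[\![\varphi]\!]_{\mathcal{M}}$ and then appealing to decidability of membership and non-emptiness for regular languages. Your additional remarks on the definability of $@$ and $[U]$ in {\bf H$_t(D)$} and on the effectiveness of the construction match the paper's own discussion.
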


\begin{proposition}
Model  checking of the {\bf H$_t\left(@,\downarrow\right)$}-formula $\downarrow \!\! x.\left<R\right> x$ in {\bf H$_t\left(@,\downarrow\right)$} on a given input rational Kripke model is not decidable.
\end{proposition}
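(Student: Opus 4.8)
The plan is to reduce an undecidable problem about rational transducers to model checking of the single fixed formula $\downarrow \!\! x.\left<R\right> x$ on rational Kripke models. The key observation is that $\mathcal{M}, w \models \downarrow \!\! x.\left<R\right> x$ holds precisely when $w R w$, i.e., when $w$ is a reflexive point of $R$. Hence global model checking of this formula computes the set $\{ w \mid w R w \}$, and satisfiability-in-a-model checking decides whether $R$ has \emph{any} reflexive point, i.e., whether $\exists x\, Rxx$ holds in the given rational frame. But this last query was already noted in the excerpt (following Johnson \cite{Johnson86} and Morvan \cite{morvan}) to be undecidable on rational frames, via a direct reduction from the Post Correspondence Problem: given an instance $\{(u_1,v_1),\ldots,(u_n,v_n)\}$ of PCP, take the one-state transducer whose unique state is both initial and accepting and which allows exactly the transitions $(u_1,v_1),\ldots,(u_n,v_n)$; then the transducer accepts some pair $(w,w)$ iff the PCP instance has a solution.

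First I would spell out the semantic equivalence: for any model $\mathcal{M}$ and state $w$, by the definition of $\downarrow$ we have $\mathcal{M},w\models\downarrow \!\! x.\left<R\right> x$ iff $\mathcal{M},w\models (\left<R\right> x)[x\leftarrow i_w]$ iff $\mathcal{M},w\models\left<R\right> i_w$, and since $i_w$ denotes exactly $w$, this holds iff there is a state $u$ with $w R u$ and $u = w$, i.e., iff $w R w$. So $[\![\downarrow \!\! x.\left<R\right> x]\!]_{\mathcal{M}} = \{w \mid wRw\}$. Next I would package the PCP reduction as a map from PCP instances to rational Kripke models: from an instance $I$ build the one-state transducer $\mathcal{T}_I$ described above, let $S = \Sigma^*$ (a regular language, recognized by a trivial automaton) with $R$ the relation recognized by $\mathcal{T}_I$, and take any rational valuation, say the empty one on all propositional variables. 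This $\mathcal{M}_I$ is a bona fide rational Kripke model and is effectively computable from $I$. Finally, I would argue: if model checking of $\downarrow \!\! x.\left<R\right> x$ were decidable, then in particular satisfiability-in-a-model would be decidable (by Theorem~\ref{modelcheckKt}-style reasoning it reduces to non-emptiness of an effectively obtained language; but here we can also just ask the local/global checker), so we could decide whether $[\![\downarrow \!\! x.\left<R\right> x]\!]_{\mathcal{M}_I} \neq \emptyset$, i.e., whether $\exists w\, (w R w)$ in $\mathcal{M}_I$, i.e., whether $I$ has a solution — contradicting undecidability of PCP.

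There is essentially no technical obstacle here; the whole argument is a short reduction. The only point requiring a little care is the precise handling of the binder semantics — in particular checking that the substitution $\varphi[x\leftarrow i_w]$ in the definition of $\downarrow$ does yield $\left<R\right> i_w$ and that nominals are interpreted as singletons, so that $\left<R\right> i_w$ at $w$ is genuinely equivalent to $wRw$ rather than to something weaker. One should also remark that the undecidability is already witnessed at the level of the \emph{satisfiability-in-a-model} task (hence a fortiori the other two tasks fail as well, since from a decision procedure for global model checking one extracts the extension and tests emptiness, and from local model checking one could in principle enumerate — though the cleanest statement is simply that global model checking, which outputs a finite presentation of $\{w\mid wRw\}$, would let us decide emptiness and thus PCP). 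I would state the proof for whichever of the three variants the proposition intends; as written, the cleanest route is: global model checking of this formula would produce an automaton for $\{w \mid wRw\}$, whose emptiness is decidable, contradicting undecidability of $\exists x\,Rxx$ on rational frames.
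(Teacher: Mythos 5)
Your proof is correct and follows exactly the route the paper takes: the paper's own proof is a one-line appeal to Morvan's reduction of $\exists x\, Rxx$ on rational frames to the Post Correspondence Problem, which you have simply spelled out in full (the semantic equivalence of $\downarrow \!\! x.\left<R\right> x$ with reflexivity at a point, the one-state transducer built from a PCP instance, and the reduction to emptiness of the extension). Your added remark that the undecidability is really witnessed at the level of satisfiability-in-a-model/global model checking is a useful clarification the paper leaves implicit, but it does not change the argument.
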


\begin{proof}
Immediate consequence from Morvan's earlier mentioned reduction \cite{morvan} of the model checking of $\exists x Rxx$ to the Post Correspondence Problem.
\hspace{\fill} $\triangleleft$
\end{proof}

\begin{proposition}
There is a rational Kripke model on which model checking formulae from the hybrid language is undecidable.
\end{proposition}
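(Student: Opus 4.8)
The plan is to fix one particular rational Kripke model whose model-theoretic complexity is already known to be maximal, namely Thomas's rational graph with undecidable first-order theory (referenced in the excerpt as \cite{ThomasConstructing}), or alternatively any of the rational graphs encoding a universal Turing machine. On such a fixed model, I would then show that hybrid formulae -- in particular formulae of {\bf H$_t(@,\downarrow)$} -- can express properties that subsume full first-order logic, so that a decision procedure for hybrid model checking on that fixed model would yield a decision procedure for the undecidable first-order theory of the graph, a contradiction.

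First I would recall (or re-state) the standard fact that the hybrid language with the binder $\downarrow$ and the satisfaction operator $@$, together with the two modalities $\left<R\right>$ and $\left<R^{-1}\right>$, has exactly the expressive power of first-order logic over the signature with one binary relation, under the standard translation: each first-order variable is simulated by a state variable bound by $\downarrow$, quantifiers $\exists y$ are simulated by navigating with $\left<R\right>$ / $\left<R^{-1}\right>$ combined with $@$, and atomic formulae $Rxy$, $x=y$ become $@_x \left<R\right> y$ and $@_x y$ respectively. (If one prefers not to rely on connectedness of the graph via $R$, one can work instead in {\bf H$_t(U,\downarrow)$}, using the universal modality $[U]$ to simulate unrestricted quantification; either variant suffices and I would state which hybrid language the undecidability is claimed for.) Thus for every first-order sentence $\sigma$ over the vocabulary $\{R\}$ there is an effectively obtained hybrid sentence $\sigma^*$ such that $\mathcal{M} \models \sigma$ iff $[\![\sigma^*]\!]_{\mathcal{M}} \neq \emptyset$ (equivalently $= \mathcal{M}$, since $\sigma^*$ is a sentence and its extension is either empty or the whole domain).

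Next I would invoke Thomas's result \cite{ThomasConstructing}: there is a single rational graph $\mathcal{G}_0 = (S_0, E_0)$, with an explicitly given regular set $S_0$ and rational transducer for $E_0$, whose first-order theory is undecidable. View $\mathcal{G}_0$ as a rational Kripke frame and add any rational valuation (e.g., the empty one) to obtain a rational Kripke model $\mathcal{M}_0$. If model checking of hybrid formulae on $\mathcal{M}_0$ were decidable, then, given any first-order sentence $\sigma$, we could compute $\sigma^*$ and decide whether $[\![\sigma^*]\!]_{\mathcal{M}_0} \neq \emptyset$, i.e. decide $\mathcal{M}_0 \models \sigma$ -- contradicting undecidability of $\mathrm{Th}(\mathcal{G}_0)$. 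Hence model checking formulae from the hybrid language on $\mathcal{M}_0$ is undecidable, which is the assertion of the proposition.

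The main obstacle I anticipate is a purely bookkeeping one: making the reduction clean requires pinning down precisely which hybrid operators are needed and verifying the standard translation goes through over an arbitrary (not necessarily tree-like or connected) graph -- this is where the choice between relying on the difference/universal modality versus relying on $R$-reachability matters, and I would resolve it by taking the language rich enough to contain $\downarrow$ together with $@$ and (if needed) $[U]$, all of which the excerpt has already introduced. A secondary point to be careful about is that the earlier proposition already shows $\downarrow \!\! x.\left<R\right> x$ alone gives undecidability via Morvan's reduction; the present proposition is strictly stronger in that it exhibits a \emph{single fixed} model on which the \emph{whole} hybrid language is undecidable, so the emphasis in the write-up should be on the fixedness of $\mathcal{M}_0$ and on the full first-order expressive strength, rather than on re-deriving undecidability from scratch.
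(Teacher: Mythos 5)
Your proposal is correct and follows essentially the same route as the paper, whose proof is a one-sentence sketch: take Thomas's rational graph with undecidable first-order theory, view it as a rational Kripke model, and observe that the first-order properties involved are expressible in the hybrid language with $@$ and $\downarrow$, so decidable hybrid model checking on that fixed model would decide its first-order theory. Your version merely fills in the standard-translation details and is appropriately careful about whether $[U]$ is needed for unrestricted quantification, a point the paper glosses over by claiming only that the \emph{specific} queried properties are expressible in {\bf H$_t(@,\downarrow)$}.
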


\begin{proof} \textit{(Sketch)}
The rational graph constructed by Thomas \cite{ThomasConstructing} can be used to prove this undecidability, since the first-order properties queried there are also expressible in {\bf H$_t\left(@,\downarrow\right)$}.
\hspace{\fill} $\triangleleft$
\end{proof}

\subsection{Counting modalities}

We now consider extensions of {\bf K$_t$} with counting (or, graded) modalities:

\begin{itemize}

\item $\Diamond^{\geq k}\varphi$ with semantics: `there exist at least $k$ successors where $\varphi$ holds';

\item $\Diamond^{\leq k}\varphi$ with semantics: `there exist at most $k$ successors where $\varphi$ holds';

\item $\Diamond^{k}\varphi$ with semantics: `there exist exactly $k$ successors where $\varphi$ holds';

\item $\Diamond^{\infty}\varphi$ with semantics: `there exist infinitely many successors where $\varphi$ holds'.

\end{itemize}

Clearly, some of these are inter-definable: $\Diamond^{k}\varphi := \Diamond^{\geq k}\varphi \land \Diamond^{\leq k}\varphi$, while $\Diamond^{\geq k}\varphi := \neg \Diamond^{\leq k-1}\varphi$ and $\Diamond^{\leq k}\varphi := \neg \Diamond^{\geq k+1}\varphi$.

We denote by {\bf C}$_t$ the extension of {\bf K}$_t$ with $\Diamond^{\infty}\varphi$ and all counting modalities for all integers $k \geq 0$. Further, we denote by {\bf C}$^0_t$ the fragment of {\bf C}$_t$ where no occurrence of a counting modality is in the scope of any modal operator.

\begin{proposition}
Local model checking of formulae in the language {\bf C}$^0_t$ in rational Kripke models is decidable.
\end{proposition}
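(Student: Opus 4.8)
The plan is to reduce local model checking of a $\mathbf{C}^0_t$-formula at a state $s$ to finitely many decidable questions: model checking of ordinary $\mathbf{K}_t$-subformulae (handled by Theorem~\ref{modelcheckKt} and Corollary~\ref{modelcheckCor}) together with counting the number of $R$-successors (or $R^{-1}$-predecessors) of $s$ that lie in a given regular set. Since in $\mathbf{C}^0_t$ no counting modality occurs inside the scope of any modal operator, every maximal subformula $\psi$ appearing directly under a counting modality is a pure $\mathbf{K}_t$-formula, so by Theorem~\ref{modelcheckKt} its extension $[\![\psi]\!]_{\mathcal{M}}$ is an effectively computable regular language $X_\psi$. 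The outer structure of the formula above these counting modalities is just boolean, so once each atomic assertion ``$\Diamond^{\geq k}\psi$ holds at $s$'', ``$\Diamond^{\leq k}\psi$ holds at $s$'', ``$\Diamond^{\infty}\psi$ holds at $s$'' (and their backward-looking analogues, and ordinary $\mathbf{K}_t$-atoms) is decided, we evaluate the boolean combination at $s$.

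First I would fix the state $s\in\Sigma^*$ and, for each counting subformula $\Diamond^{\bowtie k}\psi$ occurring in $\varphi$, compute the regular language $X_\psi=[\![\psi]\!]_{\mathcal{M}}$ via Theorem~\ref{modelcheckKt}, with automaton $\mathcal{A}_\psi$. The key observation is that the set of $R$-successors of $s$ lying in $X_\psi$ is
\[
\mathrm{Succ}_\psi(s) = \{\, w \in X_\psi \mid s\,R\,w \,\}.
\]
Using the transducer $\mathcal{T}$ for $R$, one feeds the fixed input word $s$ into $\mathcal{T}$ and records, via a synchronized-product-style construction with $\mathcal{A}_\psi$, exactly the output words $w$ that (i) are produced by an accepting run of $\mathcal{T}$ on input $s$ and (ii) are accepted by $\mathcal{A}_\psi$: this yields a finite automaton $\mathcal{B}_\psi$ recognizing the language $\mathrm{Succ}_\psi(s)$. (Concretely: intersect the ``output projection'' of $\mathcal{T}$ restricted to input $s$ — a regular language, since $s$ is a fixed word and $\mathcal{T}$ has finitely many states — with $X_\psi$; regular languages are effectively closed under intersection.) Then $\Diamond^{\geq k}\psi$ holds at $s$ iff $|L(\mathcal{B}_\psi)|\geq k$, $\Diamond^{\leq k}\psi$ holds iff $|L(\mathcal{B}_\psi)|\leq k$, $\Diamond^{k}\psi$ iff $|L(\mathcal{B}_\psi)| = k$, and $\Diamond^{\infty}\psi$ holds iff $L(\mathcal{B}_\psi)$ is infinite. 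All of these are decidable questions about a regular language given by an automaton: finiteness of a regular language is decidable (it is infinite iff the automaton has a reachable, co-reachable cycle), and when finite one can enumerate its members and compare the count with $k$ (equivalently, for the threshold $\Diamond^{\geq k}\psi$ one checks whether the automaton admits at least $k$ distinct accepting paths, which is a finite search). The backward-looking modalities $\Diamond^{\bowtie k}$ with respect to $R^{-1}$ are handled identically, replacing $\mathcal{T}$ by the transducer for $R^{-1}$ obtained by swapping input and output symbols, as in the proof of Lemma~\ref{ratprevreg}.

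Having decided each atomic counting assertion and each ordinary $\mathbf{K}_t$-atom at $s$ (the latter by Corollary~\ref{modelcheckCor}, i.e.\ membership of $s$ in the regular language $[\![\cdot]\!]_{\mathcal{M}}$), I would finish by evaluating the boolean skeleton of $\varphi$ at $s$, which is immediate. The main obstacle is the step producing $\mathrm{Succ}_\psi(s)$ as an effective regular language: one must argue carefully that the set of outputs of the transducer $\mathcal{T}$ on the single fixed input word $s$ is regular and effectively computable — this is where the finiteness of $\mathcal{T}$'s state set and the boundedness of $|s|$ are used, together with the possibility of arbitrarily many $\langle\epsilon,a\rangle$-transitions appending to the output after $s$ has been fully read, so the output language can genuinely be infinite, which is exactly what makes $\Diamond^{\infty}$ non-trivial. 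Once this is in place, the rest is an assembly of already-established decidability facts; note also that the restriction to $\mathbf{C}^0_t$ is essential, since nesting a counting modality under a modal operator would force us to recognize the (generally non-regular) set of states with a prescribed number of successors in a regular set, and there is no analogue of Lemma~\ref{ratprevreg} for that.
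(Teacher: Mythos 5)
Your proposal is correct and follows essentially the same route as the paper: compute an automaton for the regular extension $[\![\psi]\!]_{\mathcal{M}}$ of each $\mathbf{K}_t$-subformula $\psi$ under a counting modality via Theorem~\ref{modelcheckKt}, decide the cardinality questions (at least/at most/exactly $k$, infinite) for the relevant regular language, and then evaluate the boolean skeleton at $s$. The only difference is that you spell out explicitly how the set $\{w \in [\![\psi]\!]_{\mathcal{M}} \mid s\,R\,w\}$ is obtained as an effective regular language (in effect, Lemma~\ref{ratprevreg} applied to the singleton $\{s\}$, intersected with $[\![\psi]\!]_{\mathcal{M}}$), a step the paper's proof leaves implicit.
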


\begin{proof}
First we note that each of the following problems: `\textit{Given an automaton $A$, does its language contain at most / at least / exactly $k$ / finitely / infinitely many  words?}' is decidable. Indeed, the case of finite (respectively infinite) language is well-known (see e.g., \cite{Martin}, pp. 186--189). A decision procedure\footnote{The procedure designed here is perhaps not the most efficient one. but, it will not make the complexity of the model checking worse, given the high overall complexity of the latter.} for recognizing if the language of a given automaton $\mathcal{A}$ contains at least $k$ words can be constructed recursively on $k$. When $k=1$ that boils down to checking non-emptiness of the language (\textit{ibid}). Suppose we have such a procedure $P_k$ for a given $k$. Then, a procedure for $k+1$ can be designed as follows: first, test the language $L(\mathcal{A})$ of the given automaton for non-emptiness by looking for any word recognized by it (by searching for a path from the initial state to any accepting state). If such a word $w$ is found, modify the current automaton to exclude (only) $w$ from its language, i.e. construct an automaton for the language $L(\mathcal{A}) \setminus \{w\}$, using the standard automata constructions. Then, apply the procedure $P_k$ to the resulting automaton.

Testing $L(\mathcal{A})$ for having at most $k$ words is reduced to testing for at least $k+1$ words; likewise, testing for exactly $k$ words is a combination of these.

Now, the claim follows from Theorem \ref{modelcheckKt}. Indeed, given a RKM $\mathcal{M}$ and a formula $\varphi \in$ {\bf C}$^0_t$, for every subformula $\Diamond^{c}\psi$ of $\varphi$, where $\Diamond^{c}$ is any of the counting modalities listed above, the subformula $\psi$ is in {\bf K}$_t$, and therefore an automaton for the regular language $[\![\psi]\!]_\mathcal{M}$ is effectively computable, and hence the question whether $\Diamond^{c}\psi$ is true at the state where the local model checking is performed can be answered effectively. It remains to note that every formula of {\bf C}$^0_t$ is a boolean combination of subformulae $\Diamond^{c}\psi$ where $\psi \in$ {\bf K}$_t$.
\hspace{\fill} $\triangleleft$
\end{proof}

At present, we do not know whether any of the counting modalities preserves regularity in rational models, and respectively whether global model checking in rational models of either of these languages is decidable.

\subsection{A presentation based extension}

Here we consider a `presentation-based' extension of the multi-modal version of {\bf K$_t$}, where the new modalities are defined in terms of word operations, so they only have meaning in Kripke models where the states are labeled by words (such as the rational Kripke models) hereafter called \emph{Kripke word-models}.

To begin with, for a given alphabet $\Sigma$, with every language $X\subseteq \Sigma^*$ we can uniformly associate the following binary relations in $\Sigma^*$:

\medskip

$X?:=\left\{\left(u,u\right)|u\in X \right\}$;

$\overrightarrow{X}:=\left\{\left(uv,v\right)|u\in X,v\in\Sigma^*\right\}$.

\begin{proposition}
For every regular language $X\subseteq \Sigma^*$ the relations $X?$ and
$\overrightarrow{X}$ are rational.
\end{proposition}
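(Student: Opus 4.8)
The plan is to construct explicit transducers recognizing $X?$ and $\overrightarrow{X}$ directly from a finite automaton $\mathcal{A} = \langle Q, \Sigma, q^0, F, \delta \rangle$ recognizing the regular language $X$, thereby witnessing that both relations are rational. This is the most natural route given that the paper works throughout with concrete machine presentations, and it parallels the constructions already used (e.g., $R_X$ from $\mathcal{A}$, the synchronized product).

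For $X?$, I would build the transducer $\mathcal{T}_{X?}$ with the same state set, initial state, and accepting states as $\mathcal{A}$, and replace every transition $(q, x, q')$ of $\mathcal{A}$ by the transition $(q, x, x, q')$ (both heads read the same letter simultaneously). An accepting run of $\mathcal{T}_{X?}$ on a pair $(u,v)$ then forces $u = v$ letter-by-letter and tracks exactly an accepting run of $\mathcal{A}$ on $u$, so $\mathcal{T}_{X?}$ recognizes $\{(u,u) \mid u \in X\} = X?$. For $\overrightarrow{X}$, I would build a transducer that operates in two phases: a first phase that reads a prefix $u \in X$ on the input head while the output head waits (transitions of the form $(q, x, \epsilon, q')$ mirroring $\delta$), and then, upon reaching an accepting state of $\mathcal{A}$, nondeterministically jumps via an $\epsilon$-transition to a fresh single "copy" state $c$ which is accepting and carries a loop $(c, a, a, c)$ for every $a \in \Sigma$, transcribing the remaining suffix $v$ faithfully. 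An accepting run then decomposes the input as $uv$ with $u \in X$ and $v \in \Sigma^*$ arbitrary, producing output $v$, which is precisely $\overrightarrow{X}$. Alternatively, one can present $\overrightarrow{X}$ via closure properties: it is the composition of $R_X$ (restricted suitably) with a suffix-selecting rational relation, using closure of rational relations under composition cited from \cite{Berstel}; but the direct construction is cleaner and effective.

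The verification in each case is a routine induction on the length of runs, matching runs of the constructed transducer with runs of $\mathcal{A}$, so no real obstacle arises there. The one point requiring a little care is the handling of the empty word and of $\epsilon$-transitions already present in $\mathcal{A}$ (if $\mathcal{A}$ is taken to be nondeterministic with $\epsilon$-moves): in the $X?$ construction an $\epsilon$-transition $(q,\epsilon,q')$ of $\mathcal{A}$ should become $(q,\epsilon,\epsilon,q')$, and in the $\overrightarrow{X}$ construction one must make sure the phase transition into the copy state $c$ is available from every state reachable as an accepting state of $\mathcal{A}$, including $q^0$ itself when $\epsilon \in X$. These are bookkeeping matters rather than genuine difficulties. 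I expect the main conceptual step — and the only place where one genuinely invokes earlier theory rather than a bare construction — is recognizing that "transducer recognizes the relation" is exactly the right notion to close the argument, so that rationality of $X?$ and $\overrightarrow{X}$ is immediate once the machines are exhibited.
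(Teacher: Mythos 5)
Your proposal is correct and follows essentially the same route as the paper: a direct, uniform construction of transducers from the automaton $\mathcal{A}$ for $X$. Your two-phase machine for $\overrightarrow{X}$ is just the explicit unfolding of the paper's construction, which concatenates the transducer for $\left\{(u,\varepsilon)\mid u\in X\right\}$ (obtained by turning $a$-transitions into $(a,\varepsilon)$-transitions) with the one for the identity relation $\left\{(v,v)\mid v\in\Sigma^*\right\}$.
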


\begin{proof}
For each of these, there is a simple uniform construction that produces from the automaton recognizing $X$ a transducer recognizing the respective relation. For instance, the transducer for $\overrightarrow{X}$ is constructed as composition of the transducers (defined just like the composition of finite automata) for the rational relations  $\left\{\left(u,\varepsilon\right)\mid u\in X\right\}$ and
$\left\{\left(v,v\right)\mid v\in\Sigma^*\right\}$. The former is constructed from the automaton $\mathcal{A}$ for $X$ by converting every $a$-transition in
$\mathcal{A}$, for $a \in \Sigma$, to $(a,\varepsilon)$-transition, and the
latter is constructed from an automaton recognizing $\Sigma^*$ by converting every $a$-transition, for $a \in \Sigma$, to $(a,a)$-transition.
\hspace{\fill} $\triangleleft$
\end{proof}

This suggests a natural extension of (multi-modal) {\bf K$_t$} with an infinite family of new modalities associated with relations as above defined over the extensions of formulae. The result is a richer, PDL-like language which extends the star-free fragment of PDL with test and converse by additional program constructions corresponding to the regularity preserving operations defined above. We call that language `\emph{word-based star-free PDL (with test and converse)}', hereafter denoted \textbf{WPDL}.

Formally, \textbf{WPDL} has two syntactic categories, viz., \emph{programs} \textsf{PROG} and \emph{formulae} \textsf{FOR}, defined over given alphabet $\Sigma$, set of atomic propositions \textsf{AP}, and set of atomic programs (relations) \textsf{REL}, by mutual induction as follows:

Formulae \textsf{FOR}:
\[ \varphi ::= p \mid l_a \mid \neg \varphi \mid \varphi_1 \vee \varphi_2 \mid \langle \alpha
\rangle \varphi
\]
for $p \in\ $\textsf{AP}, $a \in \Sigma$, and $\alpha \in\ $\textsf{PROG}, where for each $a \in \Sigma$ we have added a special new atomic proposition $l_a$, used further to translate extended star-free regular expressions to
WPDL-formulae.

Programs \textsf{PROG}:
\[ \alpha ::= \pi \mid \alpha\prime \mid \alpha_1 \cup \alpha_2 \mid \alpha_1 \circ \alpha_2
 \mid \varphi? \mid \overrightarrow{\varphi}
\]
where $\pi\in\ $\textsf{REL} and $\varphi\in\ $\textsf{FOR}.

We note that \textbf{WPDL} is not a purely logical language, as it does not have semantics on abstract models but only on word-models (including rational Kripke models), defined as follows. Let $\mathcal{M} = (S,\{R_\pi\}_{\pi \in \textsf{REL}},V)$ be a Kripke word-model over an alphabet $\Sigma$, with a set of states $S\subseteq \Sigma^*$, a family of basic relations indexed with \textsf{REL}, and a valuation $V$ of the atomic propositions from \textsf{AP}. Then every formula $\varphi\in$ \textsf{FOR} is associated with the language $[\![\varphi]\!]_{\mathcal{M}}\subseteq\Sigma^*$, defined as before, where $[\![p]\!]_{\mathcal{M}} := V(p)$ for every $p\in \textsf{AP}$ and $[\![l_a]\!] := \{ a \} \cap S$ for every $a \in \Sigma$. Respectively, every program $\alpha$ is associated with a binary relation $R_{\alpha}$ in $\Sigma^*$, defined inductively as follows (where $\circ$ is composition of relations):

\begin{itemize}
\item $R_{\alpha\prime} := R^{-1}_{\alpha}$,

\item $R_{\alpha_1 \cup \alpha_2} := R_{\alpha_1} \cup R_{\alpha_2}$,

\item $R_{\alpha_1 \circ \alpha_2} := R_{\alpha_1} \circ R_{\alpha_2}$,

\item $R_{\varphi?} := [\![\varphi]\!]?$,

\item $R_{\overrightarrow{\varphi}} := \overrightarrow{[\![\varphi]\!]}$.

\end{itemize}

\begin{lemma} \label{lem:WPDL}
For every \textbf{WPDL}-formulae $\varphi,\psi$ and a Kripke word-model $\mathcal{M}$:
\begin{enumerate}
\item $[\![\langle \varphi? \rangle \psi]\!]_{\mathcal{M}} =
    [\![\varphi]\!]_{\mathcal{M}} \cap [\![\psi]\!]_{\mathcal{M}}$.

\item  $[\![\langle \overrightarrow{\varphi} \rangle \psi]\!]_{\mathcal{M}} =
    [\![\varphi]\!]_{\mathcal{M}};[\![\psi]\!]_{\mathcal{M}}$ (where ; denotes
    concatenation of languages).
\end{enumerate}
\end{lemma}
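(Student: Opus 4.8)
The plan is to unwind both statements directly from the semantic clauses for the two program constructors $\varphi?$ and $\overrightarrow{\varphi}$, together with the relational definitions $R_{\varphi?} := [\![\varphi]\!]?$ and $R_{\overrightarrow{\varphi}} := \overrightarrow{[\![\varphi]\!]}$. For part (1), a state $u$ belongs to $[\![\langle \varphi? \rangle \psi]\!]_{\mathcal{M}}$ exactly when there is a state $v$ with $u\, R_{\varphi?}\, v$ and $v \in [\![\psi]\!]_{\mathcal{M}}$. By the definition of the test relation $[\![\varphi]\!]?=\{(w,w)\mid w\in[\![\varphi]\!]_{\mathcal{M}}\}$, the only such $v$ is $u$ itself, and it exists precisely when $u \in [\![\varphi]\!]_{\mathcal{M}}$; combining the two conditions gives $u \in [\![\varphi]\!]_{\mathcal{M}} \cap [\![\psi]\!]_{\mathcal{M}}$. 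This is a short chain of biconditionals, with no real obstacle.

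For part (2), the argument is again a direct chase through the definitions. A state $w$ lies in $[\![\langle \overrightarrow{\varphi} \rangle \psi]\!]_{\mathcal{M}}$ iff there is some $t$ with $w\, R_{\overrightarrow{\varphi}}\, t$ and $t \in [\![\psi]\!]_{\mathcal{M}}$. By the definition $\overrightarrow{[\![\varphi]\!]_{\mathcal{M}}} = \{(uv,v) \mid u \in [\![\varphi]\!]_{\mathcal{M}},\, v \in \Sigma^*\}$, the pairs $(w,t)$ in this relation are exactly those where $w = ut$ for some $u \in [\![\varphi]\!]_{\mathcal{M}}$. Hence $w \in [\![\langle \overrightarrow{\varphi} \rangle \psi]\!]_{\mathcal{M}}$ iff $w = ut$ with $u \in [\![\varphi]\!]_{\mathcal{M}}$ and $t \in [\![\psi]\!]_{\mathcal{M}}$, i.e.\ iff $w$ belongs to the concatenation $[\![\varphi]\!]_{\mathcal{M}};[\![\psi]\!]_{\mathcal{M}}$. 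One small point worth stating explicitly is that these equalities are equalities of languages in $\Sigma^*$ and do not presuppose that the factored words $ut$ or the witness $t$ lie in the state set $S$; the semantic clause for $\langle\alpha\rangle$ quantifies over $R_\alpha$-successors in $\Sigma^*$ as defined, so no intersection with $S$ intervenes here.

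There is essentially no hard part: the lemma is a bookkeeping observation that records, for later use, how the two new regularity-preserving program constructors act on extensions, so that one can translate extended star-free regular expressions into \textbf{WPDL}-formulae and propagate the effective-regularity results of Theorem~\ref{modelcheckKt}. The only thing to be careful about is keeping the quantifier alternations straight when reading off the successor relations from their set-builder definitions, and making sure the decomposition $w = ut$ in part (2) is applied in the correct order (prefix in $[\![\varphi]\!]_{\mathcal{M}}$, suffix in $[\![\psi]\!]_{\mathcal{M}}$), matching the orientation of $\overrightarrow{X}$.
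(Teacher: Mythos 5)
Your proposal is correct and matches the paper's own proof, which is the same routine unwinding of the definitions of $R_{\varphi?}$ and $R_{\overrightarrow{\varphi}}$ inside the semantic clause for $\langle\alpha\rangle$. The paper simply writes out the two chains of set equalities and labels the whole thing ``routine verification,'' so no further comparison is needed.
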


\begin{proof} Routine verification:
\begin{enumerate}
\item $[\![\langle \varphi? \rangle \psi]\!]_{\mathcal{M}} =  \{ w \in \Sigma^* \mid
    w R_{\varphi?} v$ for some $v \in [\![\psi]\!]_{\mathcal{M}} \}$

    $=  \{ w \in \Sigma^* \mid w = v$ for some $v \in [\![\varphi]\!]_{\mathcal{M}}$
    and $v \in [\![\psi]\!]_{\mathcal{M}} \} = [\![\varphi]\!]_{\mathcal{M}} \cap
    [\![\psi]\!]_{\mathcal{M}}$.

\item $[\![\langle \overrightarrow{\varphi} \rangle \psi]\!]_{\mathcal{M}} = \{ w \in
    \Sigma^* \mid w R_{\overrightarrow{\varphi}} v$ for some $v \in
    [\![\psi]\!]_{\mathcal{M}} \}$

    $= \{ uv \in \Sigma^* \mid u \in [\![\varphi]\!]_{\mathcal{M}}, v \in
    [\![\psi]\!]_{\mathcal{M}} \} =
    [\![\varphi]\!]_{\mathcal{M}};[\![\psi]\!]_{\mathcal{M}}$.
\end{enumerate}
\hspace{\fill} $\triangleleft$
\end{proof}

\begin{corollary}
For every \textbf{WPDL}-formula $\varphi$ and a rational Kripke model $\mathcal{M}$, the language $[\![\varphi]\!]_{\mathcal{M}}$ is an effectively computable from $\varphi$ regular language.

\end{corollary}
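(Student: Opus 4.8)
The plan is to prove the corollary by structural induction on the \textbf{WPDL}-formula $\varphi$, establishing simultaneously that every program $\alpha$ appearing in $\varphi$ has a rational accessibility relation $R_\alpha$, with a transducer effectively computable from the presentation of $\mathcal{M}$. The base cases for formulae are immediate: for $p\in\textsf{AP}$, $[\![p]\!]_{\mathcal{M}}=V(p)$ is regular by the definition of a rational Kripke model, and for $l_a$, $[\![l_a]\!]_{\mathcal{M}} = \{a\}\cap S$ is regular since $S$ is regular and singletons are regular, and regular languages are effectively closed under intersection. The boolean cases follow from effective closure of regular languages under complement and union.

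For the modal case $\varphi = \langle\alpha\rangle\psi$ I would first handle the program $\alpha$ by a parallel induction. Atomic programs $\pi\in\textsf{REL}$ are rational by hypothesis (they come with transducers as part of the presentation of $\mathcal{M}$). Converse $\alpha'$ preserves rationality because the class of rational relations is closed under inverses (swap input/output symbols in the transducer, as used in Lemma~\ref{ratprevreg}); union $\alpha_1\cup\alpha_2$ preserves it by closure under union; composition $\alpha_1\circ\alpha_2$ by closure under composition — all three facts are quoted in the Preliminaries from \cite{Berstel}. For the two word-based constructors: by the inductive hypothesis on formulae, $[\![\varphi]\!]_{\mathcal{M}}$ is an effectively computable regular language, so by the Proposition stating that $X?$ and $\overrightarrow{X}$ are rational for regular $X$, the relations $R_{\varphi?}=[\![\varphi]\!]_{\mathcal{M}}?$ and $R_{\overrightarrow{\varphi}}=\overrightarrow{[\![\varphi]\!]_{\mathcal{M}}}$ are rational, with transducers effectively obtained from the automaton for $[\![\varphi]\!]_{\mathcal{M}}$. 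Hence every program $\alpha$ built in $\varphi$ has an effectively computable transducer $\mathcal{T}_\alpha$ for $R_\alpha$. Then $[\![\langle\alpha\rangle\psi]\!]_{\mathcal{M}} = \langle R_\alpha\rangle [\![\psi]\!]_{\mathcal{M}}$, which is regular and effectively computable by the inductive hypothesis on $\psi$ together with Lemma~\ref{ratprevreg}.

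One subtlety worth flagging in the write-up is that the inductions on programs and on formulae are genuinely interleaved (a program $\varphi?$ or $\overrightarrow{\varphi}$ contains a formula, and a formula $\langle\alpha\rangle\psi$ contains a program), so the cleanest formulation is a single induction on the total syntactic size of the \textbf{WPDL}-expression, proving at once ``$[\![\varphi]\!]_{\mathcal{M}}$ is effectively regular'' for formulae and ``$R_\alpha$ is effectively rational'' for programs. The only potential obstacle is making sure the test and forward-shift constructions do not break the effectiveness: but this is exactly what the preceding Proposition guarantees, and Lemma~\ref{lem:WPDL} gives the direct semantic identities $[\![\langle\varphi?\rangle\psi]\!]_{\mathcal{M}}=[\![\varphi]\!]_{\mathcal{M}}\cap[\![\psi]\!]_{\mathcal{M}}$ and $[\![\langle\overrightarrow{\varphi}\rangle\psi]\!]_{\mathcal{M}}=[\![\varphi]\!]_{\mathcal{M}};[\![\psi]\!]_{\mathcal{M}}$, so one may even bypass the transducer construction for those two cases and appeal directly to effective closure of regular languages under intersection and concatenation. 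Either route closes the induction; no step presents a real difficulty beyond bookkeeping.
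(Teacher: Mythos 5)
Your proof is correct and follows exactly the route the paper intends: the paper leaves this corollary without an explicit proof, but the ingredients it lines up beforehand (the closure of rational relations under union, composition and inverse, the Proposition that $X?$ and $\overrightarrow{X}$ are rational for regular $X$, Lemma \ref{lem:WPDL}, and Lemma \ref{ratprevreg} via Theorem \ref{modelcheckKt}) are precisely the ones you assemble, with the mutual induction on formulae and programs being the natural bookkeeping. Your observation that the absence of Kleene star on programs is what keeps everything inside the rational relations (which are not closed under transitive closure) is exactly why the argument closes.
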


\begin{corollary}
Local and global model checking, as well as satisfiability checking, of \textbf{WPDL}-formulae in rational Kripke models is decidable.
\end{corollary}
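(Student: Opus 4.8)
The plan is to derive this corollary directly from the preceding corollary, which asserts that for every \textbf{WPDL}-formula $\varphi$ and every rational Kripke model $\mathcal{M}$, the language $[\![\varphi]\!]_{\mathcal{M}}$ is an effectively computable regular language. Given that, the three model-checking tasks reduce to elementary decidable questions about the automaton one obtains.

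First I would fix a rational Kripke model $\mathcal{M}$, presented by a transducer for each basic relation and automata for the states and for the atomic propositions, together with a \textbf{WPDL}-formula $\varphi$. Invoking the previous corollary, I obtain, effectively, a finite automaton $\mathcal{A}_\varphi$ recognizing $[\![\varphi]\!]_{\mathcal{M}}$. For \emph{global} model checking this automaton \emph{is} the answer: it is a finite presentation of the (regular) set of states where $\varphi$ holds. For \emph{local} model checking at a given state $s\in S$ (a word over $\Sigma$), one simply runs $\mathcal{A}_\varphi$ on the word $s$ and checks whether it is accepted; membership in a regular language is decidable (see e.g.\ \cite{Martin}). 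For \emph{satisfiability in the model}, one tests $\mathcal{A}_\varphi$ for non-emptiness, again a decidable question for regular languages (\emph{ibid.}); $[\![\varphi]\!]_{\mathcal{M}}\neq\emptyset$ iff $\varphi$ is satisfiable in $\mathcal{M}$. This parallels exactly the argument of Corollary \ref{modelcheckCor} for {\bf K$_t$}.

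One small point worth addressing explicitly in the writeup is the presence of the auxiliary atomic propositions $l_a$: since $[\![l_a]\!]_{\mathcal{M}} = \{a\}\cap S$ is manifestly regular (a finite set intersected with a regular set), it causes no difficulty, and the effectiveness claim of the previous corollary already subsumes it. The construction of $\mathcal{A}_\varphi$ proceeds by induction on $\varphi$ using the effective closure of regular languages under Boolean operations, Lemma \ref{ratprevreg} for the modalities $\langle\alpha\rangle$ with $\alpha$ built from basic relations, converse, union and composition (all of which preserve rationality), and Lemma \ref{lem:WPDL} for the test and $\overrightarrow{\cdot}$ program constructors, whose effect on extensions is intersection and concatenation of regular languages respectively — both regularity-preserving and effective.

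There is essentially no genuine obstacle here: the corollary is a routine consequence of the effective-regularity result, exactly as Corollary \ref{modelcheckCor} follows from Theorem \ref{modelcheckKt}. The only thing requiring a modicum of care is to make sure the reduction of each model-checking task to the appropriate decidable automata-theoretic question (membership, non-emptiness) is stated cleanly and that the effectiveness — i.e.\ that $\mathcal{A}_\varphi$ is actually computed, not merely known to exist — is carried through from the previous corollary.
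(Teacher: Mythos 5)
Your proposal is correct and follows exactly the route the paper intends: the corollary is an immediate consequence of the preceding corollary (effective regularity of $[\![\varphi]\!]_{\mathcal{M}}$), with local model checking and satisfiability reduced to membership and non-emptiness of a regular language, mirroring the proof of Corollary \ref{modelcheckCor}. The paper leaves this step implicit, and your writeup supplies no more and no less than that same argument.
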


\emph{Extended star-free regular expressions} over an alphabet $\Sigma$ are defined as follows:
\[ E:= a \mid \neg E  \mid  E_1 \cup E_2 \mid E_1 ; E_2, \]
where $a\in\Sigma$. Every such expression $E$ defines a regular language $L(E)$, where $\neg,\cup, ;$ denote respectively complementation, union, and concatenation of languages. The question whether two extended star-free regular expressions define the same language has been proved to have a non-elementary complexity in \cite{MeyerStockmeyer}.

Every extended star-free regular expression can be linearly translated to an
\textbf{WPDL}-formula:
\begin{itemize}
\item $\tau(a) := l_a$,
\item $\tau(\neg E) := \neg \tau(E)$,
\item $\tau(E_1 \cup E_2) := \tau(E_1) \lor \tau(E_2)$,
\item $\tau(E_1 ; E_2) := \langle\overrightarrow{\tau(E_1)}\rangle\tau(E_2)$.
\end{itemize}

\begin{lemma} \label{lem:extreg} Given an alphabet $\Sigma$, consider the rational Kripke model $\mathcal{M}^{\Sigma}$ with set of states $\Sigma^*$, over empty sets of basic relations and atomic propositions. Then, for every extended star-free regular expression $E$,
\[ L(E) = [\![\tau(E)]\!]_{\mathcal{M}^{\Sigma}}.\]
\end{lemma}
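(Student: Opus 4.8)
The plan is to prove the identity $L(E) = [\![\tau(E)]\!]_{\mathcal{M}^{\Sigma}}$ by structural induction on the extended star-free regular expression $E$, using Lemma \ref{lem:WPDL} to handle the concatenation case and the effective closure of regular languages under boolean operations (already invoked in the earlier proofs) for the remaining cases. Throughout, I would keep in mind that in $\mathcal{M}^{\Sigma}$ the set of states is the full set $\Sigma^*$, so $[\![l_a]\!]_{\mathcal{M}^{\Sigma}} = \{a\} \cap \Sigma^* = \{a\}$, and that complementation of extensions is taken relative to $\Sigma^*$ (matching the complementation in the definition of extended star-free regular expressions).

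First I would handle the base case $E = a$ for $a \in \Sigma$: then $\tau(a) = l_a$, and $[\![l_a]\!]_{\mathcal{M}^{\Sigma}} = \{a\} = L(a)$, as noted above. Next, the case $E = \neg E'$: by definition $\tau(\neg E') = \neg\tau(E')$, so $[\![\tau(\neg E')]\!]_{\mathcal{M}^{\Sigma}} = \Sigma^* \setminus [\![\tau(E')]\!]_{\mathcal{M}^{\Sigma}}$, which by the induction hypothesis equals $\Sigma^* \setminus L(E') = L(\neg E')$. The case $E = E_1 \cup E_2$ is similar, using $[\![\tau(E_1) \lor \tau(E_2)]\!]_{\mathcal{M}^{\Sigma}} = [\![\tau(E_1)]\!]_{\mathcal{M}^{\Sigma}} \cup [\![\tau(E_2)]\!]_{\mathcal{M}^{\Sigma}}$ and the induction hypothesis on each disjunct.

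The concatenation case $E = E_1 ; E_2$ is the one that actually uses the new machinery: $\tau(E_1 ; E_2) = \langle\overrightarrow{\tau(E_1)}\rangle\tau(E_2)$, and by part (2) of Lemma \ref{lem:WPDL} we have $[\![\langle\overrightarrow{\tau(E_1)}\rangle\tau(E_2)]\!]_{\mathcal{M}^{\Sigma}} = [\![\tau(E_1)]\!]_{\mathcal{M}^{\Sigma}}\,;\,[\![\tau(E_2)]\!]_{\mathcal{M}^{\Sigma}}$, which by the induction hypothesis equals $L(E_1)\,;\,L(E_2) = L(E_1 ; E_2)$. I do not expect any serious obstacle here; the only point requiring a little care is making sure the notion of complementation is consistently relative to $\Sigma^*$ in both the expression semantics and the formula semantics, which is exactly why the lemma is stated over the specific model $\mathcal{M}^{\Sigma}$ whose state set is all of $\Sigma^*$ (so that $l_a$ picks out precisely $\{a\}$ and negation complements within $\Sigma^*$). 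The whole argument is a routine induction once Lemma \ref{lem:WPDL}(2) is in hand.
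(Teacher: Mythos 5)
Your proof is correct and follows exactly the route the paper takes: a straightforward structural induction on $E$ whose only non-trivial case, concatenation, is discharged by Lemma \ref{lem:WPDL}(2). The paper leaves the routine cases implicit, while you spell them out (including the helpful observation that negation is complementation relative to $\Sigma^*$ precisely because the state set of $\mathcal{M}^{\Sigma}$ is all of $\Sigma^*$), but the argument is the same.
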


\begin{proof} Straightforward induction on $E$. The only non-obvious case $E = E_1 ; E_2$ follows from Lemma \ref{lem:WPDL}.
\hspace{\fill} $\triangleleft$
\end{proof}

Consequently, for any extended star-free regular expressions $E_1$ and $E_2$, we have that $L(E_1) = L(E_2)$ iff $[\![\tau(E_1)]\!]_{\mathcal{M}^{\Sigma}} =
[\![\tau(E_2)]\!]_{\mathcal{M}^{\Sigma}}$ iff $\mathcal{M}^{\Sigma} \models \tau(E_1) \leftrightarrow \tau(E_2)$. Thus, we obtain the following.

\begin{corollary}
Global model checking of \textbf{WPDL}-formulae in rational Kripke models has
non-elementary formula-complexity.
\end{corollary}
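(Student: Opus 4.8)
The plan is to reduce the known non-elementary lower bound for equivalence of extended star-free regular expressions, established in \cite{MeyerStockmeyer}, to the formula-complexity of global model checking of \textbf{WPDL}. All the necessary machinery has already been assembled in the preceding paragraphs; what remains is to package it as a many-one reduction that respects the complexity bookkeeping.

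First I would fix an alphabet $\Sigma$ and the distinguished rational Kripke model $\mathcal{M}^{\Sigma}$ of Lemma~\ref{lem:extreg}, which has state set $\Sigma^*$ and empty sets of basic relations and atomic propositions. Note this model is fixed once $\Sigma$ is fixed, which is exactly what is needed for a \emph{formula}-complexity statement: the model is held constant while the formula varies. Given any two extended star-free regular expressions $E_1, E_2$, I would apply the linear translation $\tau$ to obtain \textbf{WPDL}-formulae $\tau(E_1), \tau(E_2)$, and form the single formula $\tau(E_1) \leftrightarrow \tau(E_2)$ (of size linear in $|E_1| + |E_2|$). By Lemma~\ref{lem:extreg} we have $L(E_i) = [\![\tau(E_i)]\!]_{\mathcal{M}^{\Sigma}}$ for $i = 1, 2$, hence $L(E_1) = L(E_2)$ if and only if $[\![\tau(E_1)]\!]_{\mathcal{M}^{\Sigma}} = [\![\tau(E_2)]\!]_{\mathcal{M}^{\Sigma}}$, which holds if and only if $\mathcal{M}^{\Sigma} \models \tau(E_1) \leftrightarrow \tau(E_2)$, i.e. $[\![\tau(E_1)\leftrightarrow\tau(E_2)]\!]_{\mathcal{M}^{\Sigma}} = \Sigma^*$. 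The last condition is an instance of global model checking: compute $[\![\tau(E_1)\leftrightarrow\tau(E_2)]\!]_{\mathcal{M}^{\Sigma}}$ (an automaton, by the earlier corollary on \textbf{WPDL} in rational models) and test it for equality with an automaton for $\Sigma^*$, which is decidable in time polynomial in the sizes of the two automata.

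Putting the pieces together: if global model checking of \textbf{WPDL}-formulae in rational Kripke models had elementary formula-complexity, then — since $\mathcal{M}^{\Sigma}$ is a fixed model and $\tau$ is a linear-size translation, and the final equality-with-$\Sigma^*$ test adds only a polynomial overhead — equivalence of extended star-free regular expressions would be decidable in elementary time, contradicting the non-elementary lower bound of Meyer and Stockmeyer. Hence the formula-complexity of global model checking of \textbf{WPDL} is non-elementary.

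The main thing to be careful about is not a deep obstacle but a matter of correctly tracking the complexity transfer: one must confirm that every step of the reduction (the translation $\tau$, forming the biconditional, extracting the output automaton, and the final language-equality test against $\Sigma^*$) contributes at most an elementary — indeed polynomial or linear — overhead, so that a hypothetical elementary model-checking procedure would yield an elementary equivalence test. The translation $\tau$ is explicitly linear by construction; the biconditional is a linear boolean combination; the final test is polynomial; so the only non-elementary cost is whatever the model-checking procedure itself incurs, which is precisely what we are lower-bounding. A secondary point worth a sentence is that $\mathcal{M}^{\Sigma}$ genuinely qualifies as a rational Kripke model — its state set $\Sigma^*$ is regular and it carries empty (hence trivially rational) basic relations and a trivial rational valuation — so the reduction does land inside the class of instances for which the claimed complexity statement is being made.
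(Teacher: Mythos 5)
Your proposal is correct and follows exactly the paper's argument: the paper also reduces Meyer--Stockmeyer equivalence of extended star-free regular expressions to global model checking on the fixed model $\mathcal{M}^{\Sigma}$ via the linear translation $\tau$ and Lemma~\ref{lem:extreg}, concluding from the chain $L(E_1)=L(E_2)$ iff $\mathcal{M}^{\Sigma}\models\tau(E_1)\leftrightarrow\tau(E_2)$. Your version merely spells out the complexity bookkeeping (linearity of $\tau$, polynomial cost of the final equality test) that the paper leaves implicit.
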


Remark: since the $\overrightarrow{\varphi}$-free fragment of \textbf{WPDL} is expressively equivalent to {\bf K$_t$}, a translation of bounded exponential blow-up from the family of extended star-free regular expressions to the latter fragment would prove Conjecture \ref{conj}.

\section{Concluding remarks}
\label{sec:concl}

We have introduced the class of rational Kripke models and shown that all formulae of the basic tense logic \textbf{K$_t$}, and various extensions of it, have effectively computable rational extensions in such models, and therefore global model checking and local model checking of such formulae on rational Kripke models are decidable, albeit probably with non-elementary formula complexity.

Since model checking reachability on such models is generally undecidable, an important direction for further research would be to identify natural large subclasses of rational Kripke models on which model checking of K$_t$ extended with the reachability modality $\left<R\right>^*$ is decidable. Some such cases, defined in terms of the presentation, are known, e.g., rational models with length-preserving or length-monotone transition relation \cite{morvan}; the problem of finding structurally defined large classes of rational models with decidable reachability is still essentially open.

Other important questions concern deciding bisimulation equivalence between rational Kripke models, as that would allow us to transfer model checking of any property definable in the modal mu-calculus from one to the other. These questions are studied in a follow-up to the present work.

\section*{Acknowledgements}

This research has been supported by the National Research Foundation of South Africa through a research grant and a student bursary. We wish to thank Arnaud Carayol, Balder ten Cate, Carlos Areces, Christophe Morvan, and St\'ephane Demri, for various useful comments and suggestions. We are also grateful to the anonymous referee for his/her careful reading of the submitted version and many remarks and corrections which have helped us improve the content and presentation of the paper.


\begin{thebibliography}{10}

\bibitem{Berstel}
Berstel, J.:
\newblock {Transductions and Context-Free Languages}.
\newblock Teubner Studienb\"ucher Informatik. B.G. Teubner, Stuttgart (1979)

\bibitem{biere2003}
Biere, A., Cimatti, A., Clarke, E., Strichman, O., Zhu, Y.:
\newblock Bounded model checking.
\newblock Advances in Computers \textbf{58} (2003)  118--149

\bibitem{BRV}
Blackburn, P., de~Rijke, M., Venema, Y.:
\newblock Modal Logic.
\newblock CUP (2001)

\bibitem{blackburn00representation}
Blackburn, P.:
\newblock Representation, reasoning, and relational structures: a hybrid logic
  manifesto.
\newblock Logic Journal of the IGPL \textbf{8}(3) (2000)  339--365

\bibitem{BlumensathThesis}
Blumensath, A.:
\newblock Automatic structures.
\newblock {Diploma thesis, RWTH-Aachen} (1999)

\bibitem{BlumensathGraedel}
Blumensath, A., Gradel, E.:
\newblock Automatic structures.
\newblock In: Logic in Computer Science. (2000)  51--62

\bibitem{bouajjani00regular}
Bouajjani, A., Jonsson, B., Nilsson, M., Touili, T.:
\newblock Regular model checking.
\newblock In: Proc. of CAV'2000, LNCS 1855, Springer (2000)  403--418

\bibitem{BouEspMal97}
Bouajjani, A., Esparza, J., Maler, O.:
\newblock Reachability analysis of pushdown automata: Application to
  model-checking.
\newblock In Mazurkiewicz, A., Winkowski, J., eds.: Proc. of CONCUR '97. Volume
  1243 of LNCS., Springer (1997)  135--150

\bibitem{Eilenberg}
Eilenberg, S.:
\newblock Automata, Languages and Machines, vol. A.
\newblock Academic Press, New York (1974)

\bibitem{ElgotMezei}
Elgot, C., Mezei, J.:
\newblock On relations defined by finite automata.
\newblock IBM J. of Research and Development \textbf{9} (1965)  47--68

\bibitem{SEA2001}
Esparza, J., Kucera, A., Schwoon, S.:
\newblock Model-{C}hecking {LTL} with regular valuations for pushdown systems.
\newblock In: Proc. of TACS~'2001. Volume 2215 of Lecture Notes in Computer
  Science. (2001)  306--339

\bibitem{FrougnySakarovitch93}
Frougny, C., Sakarovitch, J.:
\newblock Synchronized rational relations of finite and infinite words.
\newblock Theor. Comput. Sci. \textbf{108}(1) (1993)  45--82

\bibitem{Johnson86}
Johnson, J.H.:
\newblock Rational equivalence relations.
\newblock Theor. Comput. Sci. \textbf{47}(3) (1986)  39--60

\bibitem{KhoussainovNerode}
Khoussainov, B., Nerode, A.:
\newblock Automatic presentations of structures.
\newblock In Leivant, D., ed.: Logic and Computational Complexity.
\newblock Springer, Berlin, (1994)  367--392

\bibitem{KestenEtAl01}
Kesten, Y., Maler, O., Marcus, M., Pnueli, A., Shahar, E.:
\newblock Symbolic model checking with rich assertional languages.
\newblock Theor.~Comput.~Sci. \textbf{256}(1-2) (2001)  93--112

\bibitem{KuperVardi}
Kuper, G.M., Vardi, M.Y.:
\newblock On the complexity of queries in the logical data model.
\newblock In: Lecture notes in computer science on ICDT '88, New York, NY, USA,  Springer-Verlag New York, Inc. (1988)  267--280

\bibitem{KVW00}
Kupferman, O., Vardi, M.Y., Wolper, P.:
\newblock An automata-theoretic approach to branching-time model checking.
\newblock Journal of the ACM \textbf{47}(2) (March 2000)  312--360

\bibitem{Martin}
Martin, J.C.:
\newblock Introduction to Languages and the Theory of Computation, ed. 3.
\newblock McGraw-Hill, Inc., New York (2002) 186--189

\bibitem{MeyerStockmeyer}
Meyer, A., Stockmeyer, L.:
\newblock Word problems requiring exponential time: Preliminary report.
\newblock In: Proc. of the 5th AMS Symposium on Theory of Computing. (1973)

\bibitem{morvan}
Morvan, C.:
\newblock On rational graphs.
\newblock In Tiuryn, J., ed.: Proc. of FOSSACS 2000. Volume 1784 of LNCS.
  (2000)  252--266

\bibitem{Reis1985}
Reisig, W.:
\newblock Petri nets: and introduction.
\newblock Springer-Verlag, New York, NY, USA (1985)

\bibitem{ThomasConstructing}
Thomas, W.:
\newblock Constructing infinite graphs with a decidable mso-theory.
\newblock In: Proceedings of the 28th International Symposium on Mathematical
  Foundations of Computer Science. Volume 2747 of Lecture Notes in Computer
  Science., Springer (2003)  113 -- 124

\bibitem{Var96}
Vardi, M.:
\newblock An automata-theoretic approach to linear temporal logic.
\newblock In Moller, F., Birtwistle, G., eds.: Logics for Concurrency:
  Structure versus Automata. Volume 1043 of Lecture Notes in Computer Science.,
  Springer-Verlag, Berlin (1996)  238--266

\bibitem{Walukiewicz00}
Walukiewicz, I.:
\newblock Model checking {CTL} properties of pushdown systems.
\newblock In Kapoor, S., Prasad, S., eds.: FSTTCS. Volume 1974 of Lecture Notes in Computer Science., Springer (2000)  127--138

\end{thebibliography}
\end{document}